\theoremstyle{definition}\newtheorem{theorem}{Theorem}
\theoremstyle{definition}\newtheorem{corollary}[theorem]{Corollary}
\theoremstyle{definition}\newtheorem{example}[theorem]{Example}
\theoremstyle{definition}\newtheorem{lemma}[theorem]{Lemma}
\theoremstyle{definition}\newtheorem{proposition}[theorem]{Proposition}
\theoremstyle{definition}\newtheorem{claim}[theorem]{Claim}
\title{Reachability in Vector Addition Systems is Ackermann-complete}
\author{Wojciech Czerwiński\thanks{This work was partially supported by the ERC grant INFSYS, agreement no. 950398} \\ University of Warsaw           \\ \texttt{wczerwin@mimuw.edu.pl} \and
Łukasz Orlikowski    \\ University of Warsaw           \\ \texttt{lo418363@students.mimuw.edu.pl}}
\newcommand{\ignore}[1]{}
\newcommand{\size}{\textsc{size}}
\newcommand{\F}{\mathcal{F}}
\newcommand{\A}{\mathcal{A}}
\newcommand{\N}{\mathbb{N}}
\newcommand{\Z}{\mathbb{Z}}
\newcommand{\reach}{\textsc{Reach}}
\newcommand{\trans}[1]{\stackrel{#1}{\longrightarrow}}
\newcommand{\trg}{\textup{trg}}
\newcommand{\src}{\textup{src}}
\newcommand{\conf}{\textup{Conf}}
\newcommand{\inp}{\textup{in}}
\newcommand{\out}{\textup{out}}
\newcommand{\reaches}{\longrightarrow}
\newcommand{\eff}{\textup{eff}}
\DeclareSymbolFont{extraup}{U}{zavm}{m}{n}
\DeclareMathSymbol{\varheart}{\mathalpha}{extraup}{86} 
\DeclareMathSymbol{\vardiamond}{\mathalpha}{extraup}{87} 
\newcommand{\kqhide}[1]{\textcolor{teal}{sth old is hidden here}}
\newcommand{\nl}{\textup{NL}\xspace}
\newcommand{\np}{\textup{NP}\xspace}
\newcommand{\pspace}{\textup{PSpace}\xspace}
\newcommand{\expspace}{\textup{ExpSpace}\xspace}
\newcommand{\tower}{\textup{Tower}\xspace}
\newcommand{\ackermann}{\textup{Ackermann}\xspace}
\newcommand{\zerotest}[1]{\textbf{zero-test}($#1$)}
\newcommand{\add}[2]{$\coreadd{#1}{#2}$}
\newcommand{\sub}[2]{$\coresub{#1}{#2}$}
\newcommand{\coreadd}[2]{#1 \,\, +\!\!= \, #2}
\newcommand{\coresub}[2]{#1 \,\, -\!\!= \, #2}
\newcommand{\vr}[1]{#1}
\begin{document}

\maketitle

\begin{abstract}
Vector Addition Systems and equivalent Petri nets are a well established models of concurrency.
The central algorithmic problem for Vector Addition Systems with a long research history is the reachability problem
asking whether there exists a run from one given configuration to another.
We settle its complexity to be Ackermann-complete thus closing the problem open for 45 years.
In particular we prove that the problem is $\F_k$-hard for Vector Addition Systems with States in dimension $6k$,
where $\F_k$ is the $k$-th complexity class from the hierarchy of fast-growing complexity classes.
\end{abstract}

\section{Introduction}
The model of Vector Addition Systems (VASes) is a fundamental computation model well suited to model concurrent phenomena.
Together with essentially equivalent Petri nets it is long studied and has numerous applications in modelling and analysis
of computer systems and natural processes. The central algorithmic problem for VASes is the reachability problem,
which asks whether there exists a run from one given configuration to another.
The reachability problem has a long research history.
In 1976 it was shown to be \expspace-hard by Lipton~\cite{Lipton76}.
Decidability of the reachability problem was first proven by Mayr in 1981~\cite{DBLP:conf/stoc/Mayr81}.
The construction was simplified later by Kosaraju~\cite{DBLP:conf/stoc/Kosaraju82} and Lambert~\cite{DBLP:journals/tcs/Lambert92}.
Their approach was to use an equivalent model of VASes with states (VASSes) and in certain situations, when the answer to the problem
is not clear use a nontrivial decomposition of the system into simpler ones.
This technique is called the KLM decomposition after the names of its three inventors.
Despite a substantial effort of the community for a long time there was no known upper complexity bound for the VASS reachability problem.
There were however important results in the special cases when the dimension is fixed.
Haase et al. shown an \np-completeness of the problem in binary encoded one-dimensional VASSes~\cite{DBLP:conf/concur/HaaseKOW09}.
In dimension one the reachability problem for unary encoded VASSes can be easily shown \nl-complete.
In 2015 Blondin et al. have shown that the reachability problem for two-dimensional VASSes is \pspace-complete in the case
when transitions are encoded in binary~\cite{DBLP:conf/lics/BlondinFGHM15}. Further improvement came soon after that,
a year later Englert et al. proved that the same problem in the case of unary encodings of transitions is \nl-complete~\cite{DBLP:conf/lics/EnglertLT16}.

In 2015 Leroux and Schmitz have obtained the first upper complexity bound for the reachability problem proving that it belongs
to the cubic-Ackermannian complexity class denoted also $\F_{\omega^3}$~\cite{DBLP:conf/lics/LerouxS15}.
The same authors have improved their result recently in 2019 showing that the problem can be solved in the Ackermann complexity class
(denoted $\F_\omega$)~\cite{DBLP:conf/lics/LerouxS19}.
They have actually shown that the reachability problem for $k$-dimensional VASSes
(denoted $k$-VASSes) can be solved in the complexity class $\F_{k+4}$, where $\F_i$ is the hierarchy of complexity classes
related to the hierarchy of fast-growing functions $F_i$.
In the meanwhile in~\cite{DBLP:conf/stoc/CzerwinskiLLLM19} it was shown that the reachability problem is \tower-hard,
recall that $\tower = \F_3$.
Thus the complexity gap was decreased to the gap between \tower and \ackermann complexity classes.

\paragraph*{Our contribution}
In this paper we close the above mentioned complexity gap.
Our main result is actually a more detailed hardness result, which depends on the dimension of the input VASS.

\begin{theorem}\label{thm:main}
For each $k \geq 3$ the reachability problem for $6k$-VASSes is $\F_k$-hard.
\end{theorem}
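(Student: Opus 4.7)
The plan is to reduce from reachability in Minsky-style counter machines whose counters are bounded by the fast-growing function $F_k(n)$, a problem known to be $F_k$-hard for natural formulations. The goal is a $6k$-dimensional VASS that faithfully simulates such a bounded counter machine, so that the source configuration reaches the target iff the Minsky machine halts. The central technical device is a hierarchy of ``amplifier'' gadgets $A_k$: a VASS of dimension $6k$ that exposes a single simulated counter of capacity $F_k(n)$ together with simulated increments, decrements, and zero-tests. Given $A_k$, a level-$k$ bounded Minsky machine can be simulated by keeping its control flow in the finite-state part of the VASS and using $A_k$ to realise the bounded counter.

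The amplifier hierarchy is built by induction on $k$, with a strict budget of six fresh counters per level. The base case for small $k$ can either be handled directly or imported from the earlier Tower-hardness construction, which already supplies a working amplifier at $k = 3$. In the inductive step I would construct $A_{k+1}$ out of $A_k$ by exploiting the recursive definition of $F_{k+1}$ as an iterated composition of $F_k$: the capacity-$F_k(n)$ counter provided by the inner amplifier serves as the ``iteration index'' of an outer loop, and the six fresh counters of $A_{k+1}$ encode the level-$(k+1)$ simulated counter and the auxiliary state needed to invoke $A_k$ correctly. Three complementary pairs $(c,\bar c)$, $(r,\bar r)$, $(s,\bar s)$ summing to invariants depending on the iterate of $F_k$ should suffice to implement the increment/decrement/zero-test interface at the next level.

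Since VASSes have no primitive zero-test, every zero-test must be realised by a guess-and-verify scheme. I would maintain at each level complementary counter pairs whose sum is an invariant sustained by the amplifier, so that a zero-test on the primal counter reduces to asserting that the dual counter has reached its nominal maximum. Any cheating branch that silently skips a zero-test leaves unspent tokens in some pair, which the overall reachability condition, checked against the target configuration, forbids from being discarded. Correctness of the reduction then splits into the standard two-sided obligation: constructing an honest witness run whenever the Minsky machine halts, and showing that every accepting VASS run must faithfully reflect all zero-tests.

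The hard part will be discharging this second direction while keeping exactly six counters per level of recursion. Each amplifier invocation has to correctly \emph{reinitialise} the complementary pairs of the inner amplifier between iterations and \emph{propagate} any cheating by the inner amplifier up to a detectable violation at the top level. The delicate balance between (i) enabling the outer iteration to use the inner amplifier arbitrarily often, (ii) preserving the sum invariants across these reuses, and (iii) ruling out every cheating branch, all under a tight six-counter budget, is where the main work of the proof should lie; the rest is a careful bookkeeping argument showing that the total dimension is indeed $6k$ and that the reduction is polynomial in the size of the input Minsky machine.
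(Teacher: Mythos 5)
Your high-level plan is close to the paper's: reduce from $F_k(n)$-bounded counter-machine reachability, build a hierarchy of amplifier gadgets inductively adding exactly six counters per level, and discharge zero-tests by a guess-and-verify scheme tied to the target configuration. But two of the paper's essential ideas are absent, and without them the plan as written does not close.

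First, the interface of the amplifier. You describe $A_k$ as exposing ``a single simulated counter of capacity $F_k(n)$ together with simulated increments, decrements, and zero-tests.'' The paper's amplifier does something weaker and more composable: it maps a \emph{multiplication triple} $(M,x,Mx)$ to a triple $(f(M),y,f(M)\cdot y)$, where $y$ is nondeterministically chosen. The point of a triple $(M,y,My)$ is that it is a spendable \emph{budget}: it certifies exactly $y$ sequences of exactly $M$ actions each, and that certification is what lets one simulate a bounded counter with zero-tests \emph{after} the amplifier has finished (Lemmas~\ref{lem:generator} and~\ref{lem:amplifier-to-generator}). This separation — first build the budget, then spend it — is what keeps the dimension low. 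An amplifier that directly offers a zero-testable counter is a stronger object; it is not clear how to build it with six fresh counters per level, because every zero-test you offer to the outside has to be paid for internally.

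Second, and more fundamentally, you are missing the paper's main novelty: the \emph{controlling-counter} technique of Lemma~\ref{lem:zero-testing}. Your sketch says you would keep complementary pairs $(c,\bar c)$ with a sum invariant and assert at a zero-test that $\bar c$ is at its nominal maximum, but a VASS cannot assert that either without another zero-test. Lipton-style complementary pairs let you do a bounded number of zero-tests by keeping a fresh witness per test, which does not fit a six-counter budget when the inner amplifier is reused $n$ times with $n$ unbounded. The paper's solution is to add a \emph{single} extra counter $c$ that aggregates \emph{all} the tests: each change by $a$ of a counter still awaiting $m$ tests is shadowed by a change of $c$ by $m\cdot a$, so $c=0$ at the target simultaneously certifies every pending zero-test. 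The coefficients $m$ shrink as the run progresses, which requires implementing ``add $i_1$ to $c$'' where $i_1$ is itself a counter value — that in turn is paid for by spending the input triple $(n,x,nx)$. The delicate balance you flag in your last paragraph as ``where the main work of the proof should lie'' is precisely this mechanism, and the proposal does not supply it. Without it, the reinitialisation of the inner amplifier between iterations cannot be verified inside the stated dimension, and the second direction of the reduction (every accepting VASS run is honest) fails.
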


In particular the reachability problem for $18$-VASSes is $\tower$-hard, as $\tower = \F_3$.
An immediate consequence of Theorem~\ref{thm:main} is that reachability problem for VASSes is \ackermann-hard.
Together with~\cite{DBLP:conf/lics/LerouxS19} it implies the following.

\begin{corollary}
The VASS reachability problem is \ackermann-complete.
\end{corollary}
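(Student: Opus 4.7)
My plan is to glue Theorem~\ref{thm:main} together with the Ackermann upper bound already established by Leroux and Schmitz~\cite{DBLP:conf/lics/LerouxS19}. That upper bound gives membership of VASS reachability in $\F_\omega = \ackermann$ and needs no further argument, so the entire content of the corollary reduces to deriving $\ackermann$-hardness from the parametrised hardness statement of Theorem~\ref{thm:main}.

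To get the lower bound I would unfold the definition of the class $\F_\omega$ in the fast-growing hierarchy: a problem is $\F_\omega$-hard exactly when it is $\F_k$-hard for every finite~$k$. Theorem~\ref{thm:main} supplies precisely such a family of reductions, one per $k \geq 3$, at the modest cost that the target dimension grows linearly as $6k$. Since the overall VASS reachability problem does not fix the dimension, letting $k$ range freely is legitimate, and each $6k$-VASS instance is a fortiori an instance of the unrestricted problem. This is exactly the shape of statement needed to conclude $\ackermann$-hardness.

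Combining the two bounds yields the corollary. The only point worth double-checking is that the reductions produced by the proof of Theorem~\ref{thm:main} are of a form acceptable for reductions between classes in the fast-growing hierarchy (typically primitive-recursive, or even elementary); this is standard for all the encodings used in the VASS hardness literature and in particular for the construction promised by Theorem~\ref{thm:main}. So there is no genuine technical obstacle here: the corollary is an essentially immediate consequence of Theorem~\ref{thm:main} and the cited upper bound, and all of the real work of the paper goes into proving Theorem~\ref{thm:main} itself.
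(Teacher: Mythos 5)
Your proposal matches the paper's approach: upper bound from Leroux–Schmitz, lower bound as an immediate consequence of Theorem~\ref{thm:main} by letting $k$ range over all values, together with the standard observation that the family of reductions is uniform (indeed polynomial-time in $k$), which is what makes the passage from ``$\F_k$-hard for all $k$'' to ``$\F_\omega$-hard'' legitimate. The paper treats this as immediate and gives no further argument, so your slightly more explicit discussion of the reduction class is a fair gloss rather than a departure.
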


Recently J{\'{e}}r{\^{o}}me Leroux independently has shown \ackermann-hardness of the VASS reachability
problem~\cite{JeromeAckermann}. He relies on similar known techniques, but his new contribution is substantially different than ours.

\paragraph*{Organisation of the paper}
In Section~\ref{sec:prelim} we introduce preliminary notions. Next, in Section~\ref{sec:outline} we present known approach
to the problem, introduce technical notions necessary to show our result and formulate the main technical Lemma~\ref{lem:amplifier}.
In Section~\ref{sec:zerotests} we present the main technique, which led to our result, namely the technique of performing
many zero-tests by using only one additional counter. We also present there two examples of application of this technique. The examples
are not necessary to understand the main construction, but are interesting in their own and introduce mildly the new technique.
In Section~\ref{sec:amplifier} we prove the main technical result, namely the Lemma~\ref{lem:amplifier}.
Finally, in Section~\ref{sec:future} we present possible future research directions.
\section{Preliminaries}\label{sec:prelim}

\paragraph*{Basic notions}
For $a, b \in \N$, $b \geq a$ we write $[a,b]$ for the set $\{a, a+1, \ldots, b-1, b\}$.
For a vector $v \in \Z^d$ and $i \in [1,d]$ we write $v[i]$ for the $i$-th entry of $v$.
For a vector $v \in \Z^d$ and the set of indices $S \subseteq [1,d]$ by $v[S] \in \Z^{|S|}$ we denote
vector $v$ restricted to the indices in $S$. By $0^d$ we represent the $d$-dimensional vector
with all entries being $0$.

\paragraph*{Vector Addition Systems}
A $d$-dimensional Vector Addition System with States ($d$-VASS)
consists of a finite set of \emph{states} $Q$ and a finite set of \emph{transitions} $T \subseteq Q \times \Z^d \times Q$.
Configuration of a VASS is a pair $(q, v) \in Q \times \N^d$, usually written $q(v)$. We write $\conf = Q \times \N^d$.
Transition $(p,t,q) \in T$ can be fired in the configuration $r(v) \in \conf$
if $p = r$ and $v+t \in \N^d$. Then we write $p(v) \trans{(p,t,q)} q(v+t)$.
The \emph{effect} of transition $(p, t, q)$ is a vector $t \in \N^d$, we write $\eff((p, t, q)) = t$.
A sequence of triples
$\rho = (c_1, t_1, c'_1), (c_2, t_2, c'_2), \ldots, (c_n, t_n, c'_n) \in \conf \times T \times \conf$
is a \emph{run} of VASS $V = (Q, T)$ if
for all $i \in [1,n]$ we have $c_i \trans{t_i} c'_i$ and for all $i \in [1,n-1]$ we have $c'_i = c_{i+1}$.
We extend naturally the definition of the effect to runs, $\eff(\rho) = t_1 + \ldots + t_n$.
Such a run $\rho$ is said to be \emph{from} the configuration $c_1$ \emph{to} the configuration $c'_n$.
We write then $c_1 \trans{\rho} c'_n$ slightly overloading the notation or simply $c_1 \reaches c'_n$
if there is some $\rho$ such that $c_1 \trans{\rho} c'_n$.
We also say then that the configuration $c_1$ \emph{reaches} the configuration $c'_n$
or $c'_n$ \emph{is reachable from} $c_1$.
By $\reach(\src, V) = \{c \mid \src \reaches c\}$ we denote the set of all the configurations reachable from configuration $\src$
and we call it the \emph{reachability set}.
We also write simply $\reach(\src)$ if VASS $V$ is clear from the context.
The following problem is the main focus of this paper.

\begin{quote}\label{qu:problem}
\textbf{Reachability problem for VASSes}
\begin{description}
  \item[Input] A VASS $V$ and two its configurations $\src, \trg$
  \item[Question] Does $\src \reaches \trg$ in $V$?
\end{description}
\end{quote}

The size of VASS $V$, denoted $\size(V)$, is the total number of bits needed to represent states and transitions of $V$.
A Vector Addition System (VAS) is a VASS with only one state (thus the state can be ignored).
It is folklore that reachability problems for VASSes and for VASes are interreducible in polynomial time,
therefore one can wlog. focus on one of them. In this paper we decide to work with VASSes as they
form a more robust model.

\paragraph*{Counter programs}
We often work with VASSes which have a special sequential form: each run of such a VASS
performs first some sequence of operations, then some other sequence of operations etc.
Such VASSes can be very conveniently described as counter programs.
A counter program is a sequence of instructions, each one being either the counter values modifications
of the form \add{\vr{x_1}}{a_1} \quad $\ldots$ \quad \add{\vr{x_d}}{a_d}
or a loop of the form
\begin{algorithmic}[1]
\Loop
\State P
\EndLoop
\end{algorithmic}
where $P$ is another counter program.
Such a counter program with $k$ instructions and $d$ counters $x_1, \ldots, x_d$
represents a $d$-VASS $V$ with states $q_1, \ldots, q_k, q_{k+1}$ (and some other ones)
such that:
\begin{itemize}
  \item there are two distinguished states of $V$, the state $q_1$ called the \emph{source} state of $V$
  and the state $q_{k+1}$ called the \emph{target} state of $V$;
  \item if the $i$-th instruction is of the form \add{\vr{x_1}}{a_1} \quad $\ldots$ \quad \add{\vr{x_d}}{a_d}
  then in $V$ there is a transition $q_i \trans{v} q_{i+1}$ where $v[j] = a_j$ if \add{\vr{x_j}}{a_j}
  is listed in the sequence of increments and $v[j] = 0$ otherwise;
  \item if the $i$-th instruction is the loop with body equal to counter program $P$ then
  in $V$ there are transitions $q_i \trans{0^d} \src_P$ and $\trg_P \trans{0^d} q_i$
  where $\src_P$ and $\trg_P$ are source and target states of VASS $V_P$ represented by program $P$
  \item if the $i$-th instruction is the loop then in $V$ there is a transition $q_i \trans{0^d} q_{i+1}$.
\end{itemize}
If the last instruction is a loop we often omit $q_{k+1}$ as the only transition incoming to it is
$q_k \trans{0^d} q_{k+1}$ and treat $q_k$ as the target state.

\begin{example}
The following counter program
\begin{algorithmic}[1]
\State \add{\vr{x}}{1}
\Loop
\State \sub{\vr{x}}{1} \quad \add{\vr{y}}{1}
\EndLoop
\Loop
\State \add{\vr{x}}{2} \quad \sub{\vr{y}}{1}
\EndLoop
\Loop
\State \sub{\vr{x}}{1} \quad \add{\vr{y}}{1}
\EndLoop
\Loop
\State \add{\vr{x}}{2} \quad \sub{\vr{y}}{1}
\EndLoop
\end{algorithmic}
represents the $2$-VASS presented below, state names are chosen arbitrary.
Notice that in the program there are five instructions: line 1 and loops in lines 2-3, 4-5, 6-7 and 8-9,
so the corresponding VASS has five states and can be depicted as follows.

\begin{tikzpicture}[->,>=stealth',shorten >=1pt,auto,node distance=2.2cm,semithick]
\node (s) {$s$};
\node (p1) [right of=s] {$p_1$};
\node (q1) [right of=p1] {$q_1$};
\node (p2) [right of=q1] {$p_2$};
\node (q2) [right of=p2] {$q_2$};

\path[->]
(p1) edge[->, in=50, out=130, min distance=0.5cm,loop] node {$(-1,1)$}(p1)
(q1) edge[->, in=50, out=130, min distance=0.5cm,loop] node {$(2,-1)$}(q1)
(p2) edge[->, in=50, out=130, min distance=0.5cm,loop] node {$(-1,1)$}(p2)
(q2) edge[->, in=50, out=130, min distance=0.5cm,loop] node {$(2,-1)$}(q2)

(s) edge[->] node[above] {$(1,0)$} (p1)
(p1) edge[->] node[above] {$(0,0)$} (q1)
(q1) edge[->] node[above] {$(0,0)$} (p2)
(p2) edge[->] node[above] {$(0,0)$} (q2);
\end{tikzpicture}
\end{example}

\noindent
We often use macro \textbf{for} $i$ := $1$ \textbf{to} $n$ \textbf{do},
by which we represent just the counter program in which the body of the for-loop
is repeated $n$ times. We do not allow for the use of variable $i$ inside the for-loop.

\begin{example}\label{ex:2VASS}
The following counter program uses the macro \textbf{for}. For $n = 2$ it is equivalent to the above example.
\begin{algorithmic}[1]
\State \add{\vr{x}}{1}
\For {\, $i$ \, := \, $1$ \, \textbf{to } $n$}\label{l:for}
\Loop
\State \sub{\vr{x}}{1} \quad \add{\vr{y}}{1}
\EndLoop
\Loop
\State \add{\vr{x}}{2} \quad \sub{\vr{y}}{1}
\EndLoop
\EndFor
\end{algorithmic}

\noindent
The counter program represents the following $2$-VASS.

\begin{tikzpicture}[->,>=stealth',shorten >=1pt,auto,node distance=2.2cm,semithick]
\node (s) {$s$};
\node (p1) [right of=s] {$p_1$};
\node (q1) [right of=p1] {$q_1$};
\node (b) [right of=q1] {$\ldots$};
\node (pn) [right of=b] {$p_n$};
\node (qn) [right of=pn] {$q_n$};

\path[->]
(p1) edge[->, in=50, out=130, min distance=0.5cm,loop] node {$(-1,1)$}(p1)
(q1) edge[->, in=50, out=130, min distance=0.5cm,loop] node {$(2,-1)$}(q1)
(pn) edge[->, in=50, out=130, min distance=0.5cm,loop] node {$(-1,1)$}(pn)
(qn) edge[->, in=50, out=130, min distance=0.5cm,loop] node {$(2,-1)$}(qn)

(s) edge[->] node[above] {$(1,0)$} (p1)
(p1) edge[->] node[above] {$(0,0)$} (q1)
(q1) edge[->] node[above] {$(0,0)$} (b)
(b) edge[->] node[above] {$(0,0)$} (pn)
(pn) edge[->] node[above] {$(0,0)$} (qn);
\end{tikzpicture}
\end{example}

For a counter program $V$ we write $u_\inp \trans{V} u_\out$ if there is a run of $V$ starting in counter valuation
$u_\inp$ in the source state of $V$ and finishing in counter valuation $u_\out$ in the target state of $V$.

\paragraph*{Fast-growing functions and its complexity classes}
We introduce here a hierarchy of fast-growing functions and the corresponding complexity classes.
There are many known variants of the definition of the fast-growing function hierarchy.
Notice however, that the definition of the corresponding complexity classes $\F_i$
is robust and does not depend on the small changes in the definitions of the fast-growing hierarchy
(for the robustness argument see \cite[Section 4]{DBLP:journals/toct/Schmitz16}).

Let $F_1(n) = 2n$ and let $F_k(n) = \underbrace{F_{k-1} \circ \ldots \circ F_{k-1}}_{n}(1)$ for any $k > 1$.
Therefore we have $F_2(n) = 2^n$ and $F_3(n) = \underbrace{2^{2^{\iddots^2}}}_n = \tower(n)$.
We define the function $F_\omega$ as $F_\omega(n) = F_n(n)$.
The Ackermann function is defined exactly to be the $F_\omega$ function from the fast-growing hierarchy.

Based on functions $F_k$ we define complexity classes $\F_k$ also following definitions in~\cite{DBLP:journals/toct/Schmitz16}.
The complexity class $\F_k$ contains all the problems, which can be solved in time $f \circ g$,
where $g \in F_k$ and $f$ belongs to levels $F_i$ for $i < k$ closed under composition and limited primitive recursion.
The idea is that problems in $\F_k$ can be solved by some easier-then-$F_k$ reduction to an $F_k$-solvable problem.
For example the class $\F_3$, also called $\tower$, contains all the problems, which can be solved in the time $\tower(n)$,
but also for example those, which can be solved in time $\tower(2^{2^n})$,
as $\tower(2^{2^n}) = \tower(n) \circ 2^{2^n}$.
It is well known that complexity classes $\F_k$ have natural complete problems
concerning counter automata, which we formulate precisely in Section~\ref{sec:outline}.

\section{Outline}\label{sec:outline}
Here we outline the proof of our main result, Theorem~\ref{thm:main}.
We introduce gradually intuitions, which led us to this contribution.

\paragraph*{Counter automata and $\F_k$-hardness}
We follow some of the ideas of the previous lower bound result showing \tower-hardness~\cite{DBLP:conf/stoc/CzerwinskiLLLM19}.
In particular we reduce from a similar problem concerning counter automata.
Counter automata are extensions of VASSes in which transitions may have an additional condition 
that they are fired only if certain counter is equal to exactly zero.
Such transitions are called \emph{zero-tests}.
We say that a run of counter automaton is \emph{accepting} if it starts in the distinguished initial state with all counters equal to zero and finishes in the distinguished accepting state also with all counters equal to zero.
A run is $N$-bounded if all the counters along this run have values not exceeding $N$.
It is a folklore that the following problem is $\F_k$-hard for $k \geq 3$ (for a similar problem see~\cite[Section 2.3.2]{DBLP:journals/toct/Schmitz16}):

\begin{quote}\label{qu:problem}
\textbf{$F_k$-reachability for counter automaton}
\begin{description}
  \item[Input] Three-counter automaton $\A$, number $n \in \N$
  \item[Question] Does $\A$ have an $F_k(n)$-bounded accepting run?
\end{description}
\end{quote}

Our aim is to provide a polynomial time reduction, which for each $k \geq 3$, three-counter automaton $\A$ and number $n \in \N$
constructs a $6k$-VASS together with source and target configurations $\src$ and $\trg$ such that $\src \reaches \trg$
iff $\A$ has an $F_k(n)$-bounded accepting run. This will finish the proof of Theorem~\ref{thm:main}.

\paragraph*{Multiplication triples}
As suggested above the main challenge in showing $\F_k$-hardness is the need to simulate $F_k(n)$-bounded
counters and provide zero-tests for them. We first recall an idea from~\cite{DBLP:conf/stoc/CzerwinskiLLLM19}
which reduces the problem to constructing three counters with appropriate properties.
On an intuitive level the argument proceeds as follows. Assume a machine (in our case VASS) has access to triples
of the form $(M, y, My)$. Then it can use them to perform exactly $y$ sequences of actions, whatever these actions
exactly are, and in each sequence perform exactly $M$ actions. The idea is that in each sequence VASS decreases
the second counter by one therefore assuring that the number of sequences is exactly $y$. It uses the first counter
to assure that in each sequence the number of actions is at most $M$. During each action the third counter
is decreased by one, thus each sequence of actions decreases the third counter by at most $M$.
Therefore $y$ sequences of actions can decrease the third counter maximally by $My$ and moreover
if this counter was decreased by exactly $My$ it means that in every sequence exactly the maximal
possible number $M$ of actions was performed.
Thus by checking at the end of the whole process whether the second and the third counters are equal to zero we check
whether there were exactly $y$ sequences and in all the sequences there were exactly $M$ actions.
Below we exploit this idea roughly speaking for simulating zero-tests on counters which are bounded by some value $M$ 
an arbitrary number $y$ of times. So typically in our application $M$ will be bounded and $y$ is a guessed arbitrarily big number.
We also explain below more precisely what kind of VASS we need to prove
the $\F_k$-hardness of the reachability problem.

We say that a $(d+3)$-VASS $V$ for $d \geq 0$ together with its initial configuration $c$,
accepting state $q$ and a set of test counters $T \subseteq [1,d]$ is an \emph{$M$-generator} if:
\begin{itemize}
  \item all the configurations of the form $q(x, y, z, v)$ with $v \in \N^d$ in the set $\reach(c, V)$ such that $v[t] = 0$ for all $t \in T$ 
  fulfil $v = 0^d$, $x = M$ and $z = My$;
  \item for each $y \in \N$ we have $q(M, y, My, 0^d) \in \reach(c, V)$.
\end{itemize}
We call the counters $x$, $y$, $z$ the \emph{output counters}.
In other words an $M$-generator generates triples $(x, y, z)$ on its output counters
such that we are guaranteed that they are of the form $(M, y, My)$ and moreover each such triple can be generated.
We also say briefly that $(V, c, q, t)$ is an $M$-generator.

The following lemma shows that it is enough to focus on the construction of $M$-generators,
as they allow for simulation of $M$-bounded counters. The same idea and a similar statement was already present
in~\cite{DBLP:conf/stoc/CzerwinskiLLLM19}, but we prove the lemma in order to be self-contained.

\begin{lemma}\label{lem:generator}
For any $d$-VASS $(V, s, q, T)$ with $d \geq 12$ and $|T| \leq 4$, which is an $M$-generator,
and a three-counter automaton $\A$ one can construct in polynomial time
a $d$-VASS $V_\A$ with configurations $\src$ and $\trg$ such that $\src \reaches \trg$ iff $\A$ has an $M$-bounded accepting run.
\end{lemma}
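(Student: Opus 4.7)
The construction of $V_\A$ glues two pieces together at the state $q$. The first piece is the generator $V$ run from $s$; by the defining $M$-generator property, any trace that drives the test counters to $0$ reaches $q$ in a configuration $q(M, y_0, M y_0, 0^{d-3})$ with $y_0 \in \N$ freely chosen by the prover. The second piece, appended at $q$, simulates $\A$ using the triple $(M, y_0, M y_0)$ on the output counters as a ``budget'' and the remaining $d - 3 \ge 9$ counters of the VASS as the simulation workspace.

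The simulation uses the standard complement encoding. For each counter $c_i$ of $\A$ ($i \in \{1, 2, 3\}$), reserve two VASS counters $c_i$ and $\bar c_i$, and maintain the invariant $c_i + \bar c_i = M$ by making every increment of $c_i$ atomic with a decrement of $\bar c_i$ (and conversely) inside a single VASS transition; the atomicity is what stops the prover from breaking the invariant. Three more counters $t_1, t_2, t_3$ serve as scratch space for zero-tests, for a total of $9$ auxiliary counters, fitting inside $d - 3 \ge 9$. Before the simulation starts, each $\bar c_i$ must be set to $M$: one copy is siphoned directly from $x$, and the remaining two are ``minted'' from the $(y, z)$ block by two dedicated invocations of the round gadget described below, leaving $y_0 - 2$ rounds for genuine zero-tests.

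A single round of the zero-test gadget for $c_i$ has two phases. Phase A is a loop whose body simultaneously decrements $\bar c_i, x, z$ by $1$ and increments $t_i$ by $1$; the loop can iterate at most $a \le M$ times, bounded by the current value of $x$ at the start of the round. Phase B is the reverse loop, incrementing $\bar c_i, x$ by $1$ and decrementing $t_i$ by $1$ per iteration, bounded by $t_i$. To restore $t_i$ to $0$ at the end of the round, phase B must run exactly $a$ iterations, which in turn restores $x$ to its round-start value $M$. Hence every round is accountable for decrementing $y$ by $1$ and draining exactly $a \le M$ units of $z$, while leaving $x, \bar c_i, t_i$ unchanged. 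Since $\trg$ requires $y = 0$ and $z = 0$, and $y_0$ rounds with drains $a_j \le M$ sum to $M y_0$ only when all $a_j = M$, every round is forced to achieve $a_j = M$; but phase A can reach $M$ iterations only when $\bar c_i \ge M$, i.e., $c_i = 0$, so each simulated zero-test is genuine.

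Correctness in the forward direction: given an $M$-bounded accepting run of $\A$ with $k$ zero-tests, the prover commits to $y_0 = k + 2$ in the generator phase and plays the obvious faithful simulation. In the backward direction, any $\src \reaches \trg$ trace must exit the generator phase at $q$ with the test counters at $0$, and therefore with output triple $(M, y_0, M y_0)$; the budget-accounting argument above then forces the simulation phase to be an $M$-bounded run of $\A$ ending in its accepting state. The main obstacle is the design of the round gadget: making the per-round cap imposed by $x$ reusable across rounds (via the paired decrement/increment structure of the two phases) while simultaneously respecting the complement invariant is the one non-trivial piece of bookkeeping, and once it is in place the rest of the proof is routine.
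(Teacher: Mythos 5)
Your high-level architecture (run the generator, then simulate $\A$ via a complement encoding and budget-accounted zero-tests) is the right shape and matches the paper's, but there is a concrete counter-budget error that breaks the construction in the regime the lemma actually covers. You claim your $9$ auxiliary counters ($c_1,\bar c_1,c_2,\bar c_2,c_3,\bar c_3$ plus three scratch counters $t_1,t_2,t_3$) fit inside the $d-3\geq 9$ non-output coordinates. But the up-to-$4$ test counters of the generator are among those $d-3$ coordinates, and they cannot be touched after reaching $q$: the $M$-generator guarantee only constrains configurations at $q$ whose test counters are zero, and the only place you get to demand zero is $\trg$, so the test counters must stay frozen from $q$ onward (otherwise a run could pass through $q$ with nonzero test counters — and hence uncontrolled output — and still reach $\trg$). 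The usable workspace is therefore $3$ output counters plus $d-3-|T|\geq 5$ others, i.e.\ a budget of $8$ counters, not $12$. Your proposal is four counters over; in the tight case $d=12$, $|T|=4$ it simply does not fit.

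This is not a small bookkeeping slip, because closing the gap requires exactly the two economies the paper's proof depends on. First, the output counter holding $M$ must itself be recycled as $\A$'s first counter: drain it to $0$ in a single initialisation loop that simultaneously fills all three complements to $M$, paying one $y$-ticket total; your scheme instead keeps $x$ wedged at $M$ as a separate cap and spends extra tickets minting complements one at a time. Second, the zero-test gadget needs no scratch counter: since $c_i+\bar c_i=M$ is invariant, transfer $\bar c_i$ into $c_i$ and back, draining $z$ on every step of both loops; the two loops are bounded by the complement values themselves, the full $2M$ drain is possible iff $c_i=0$ at the boundary, and $t_i$ is superfluous. With both economies the post-generator phase uses exactly $8$ counters and the lemma goes through; without them it does not. (A further local problem: ``siphon $\bar c_1$ directly from $x$'' is not well-defined in your scheme — draining $x$ into $\bar c_1$ destroys the cap your round gadget relies on, and restoring $x$ would require a zero-test you have no budget to pay for.)
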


\begin{proof}
The construction of the $d$-VASS $V_\A$ proceeds as follows.
The configuration $\src$ of $V_\A$ is the source configuration of the $M$-generator $(V, s, q, T)$.
We first run the $d$-VASS $(V, s, q, T)$, which outputs a triple $(c_1, c_2, c_3, 0^{d-3})$
under the condition that the test counters equal zero. For technical simplicity
we assume that the test counters are the ones with the biggest indices.
In the rest of the run we do not modify the test counters in order to assure (by setting $\trg[t] = 0$ for all $t \in T$)
that indeed these test counters equals zero at output of $V$. Recall that as $d \geq 12$ and we have at most $4$ test counters
then we are at least $8$ counters beside the test ones, namely: $c_1, \ldots, c_8$.
We need to simulate three counters of the automaton $\A$, say counters $x$, $y$ and $z$.
In order to assure that each run of $V_\A$ corresponds to an $M$-bounded run of $\A$ we add
for each counter $c$ another counter $\bar{c}$ such that at any time after an initialisation phase it holds $c + \bar{c} = M$.
We will use the counters $c_1$, $c_4$ and $c_5$ to simulate counters $x$, $y$ and $z$, respectively
and the counters $c_6$, $c_7$ and $c_8$ to simulate counters $\bar{x}$, $\bar{y}$
and $\bar{z}$, respectively. We thus need to set $c_6 = c_7 = c_8 = M$ in the initialisation phase,
which is realised by the following program fragment
\begin{algorithmic}[1]
\State \sub{\vr{c_2}}{1}
\Loop
  \State \sub{\vr{c_1}}{1} \quad \sub{\vr{c_3}}{1} \quad \add{\vr{c_6}}{1}  \quad \add{\vr{c_7}}{1} \quad \add{\vr{c_8}}{1}
\EndLoop
\end{algorithmic}
Counter $c_2$ is decreased here by $1$, while counter $c_3$ is decreased by at most value of $c_1$, which is $M$.
As explained before the only option for counter $c_3$ to reach value $0$ at the end of the run
is to match each decrease of $c_2$ by $1$ by a decrease by $M$. Therefore we are guaranteed
that in any run reaching configuration $\trg$ the initialisation phase indeed sets $\bar{x} = \bar{y} = \bar{z} = M$
and also we have $x = y = z = 0$ after this phase. The configuration $\trg$ of $V_\A$ is defined as follows:
the state corresponds to the accepting state of three-counter automaton $\A$, eight first counter values
correspond to counter values of accepting configuration of $\A$, namely $(c_1, \ldots, c_8) = (0^5, M, M, M)$
and all the other counters of $V_\A$ are equal zero in the configuration $\trg$.

Next VASS $V_\A$ simulates operations of counter automaton $\A$, namely increments, decrements and zero-tests.
Concretely speaking there is a copy of $\A$ inside of $V_\A$ with slightly modified transitions.
Simulation of operation $\coreadd{c}{a}$ (for both positive and negative $a$) in $\A$ is straightforward,
we add operations $\coreadd{c}{a}$ and $\coresub{\bar{c}}{a}$ to $V_\A$.
It is more challenging to simulate $\zerotest{c}$ in $\A$, we use the pair of counters $(c_2, c_3) = (b, Mb)$
generated by the $M$-generator for that.
Recall that using this pair we are able to perform exactly $b$ sequences of exactly $M$ actions. The idea is that for checking whether
$c = 0$ (and thus $\bar{c} = M$) we first transfer value of $\bar{c}$ to $c$ (i.e. decrement $\bar{c}$ and increment $c$) and
simultaneously decrement $c_3$. Then we transfer value of $c$ back to $\bar{c}$ also decrementing $c_3$.
In that way we can decrement $c_3$ at most $2M$ times and decrement by exactly $2M$ can happen only if the initial value of $c$
was zero and also final value of $c$ is zero as well. If we decrement $c_2$ by $2$ we assure that indeed $c_3$ needs to be decremented
by $2M$ and hence the $\zerotest{c}$ can be simulated as follows.
\begin{algorithmic}[1]
\State \sub{\vr{c_2}}{2}
\Loop
  \State \add{\vr{c}}{1} \quad \sub{\vr{\bar{c}}}{1} \quad \sub{\vr{c_3}}{1}
\EndLoop
\Loop
  \State \sub{\vr{c}}{1} \quad \add{\vr{\bar{c}}}{1} \quad \sub{\vr{c_3}}{1}
\EndLoop
\end{algorithmic}
Let us inspect the code to see that it indeed reflects the above story.
Recall that we keep all the time the invariant $c + \bar{c} = M$, so $\bar{c} \leq M$.
Therefore the loop in lines 2-3 is fired at most $M$ times.
Similarly the loop in lines 4-5 is fired at most $M$ times. Thus indeed the result of loops in lines 2-5 is the decrease
of counter $c_3$ by at most $2M$ and decrease by exactly $2M$ corresponds to initial and final value of $c$ being zero.
Thus lines 1-5 indeed simulate faithfully the zero-test.

As increments, decrements and zero-tests of $\A$ can be simulated faithfully by $V_\A$
one can see that runs from $\src$ to $\trg$ of $V_\A$
are in one-to-one correspondence with $M$-bounded runs of automaton $\A$.
\end{proof}

Our approach is therefore to construct $6k$-VASSes of not too big size which are $F_k(n)$-generators.

\paragraph*{Amplifiers}
At this moment it is natural to introduce a notion of \emph{amplifier}, which can be used
to produce an $N$-generator from an $M$-generator for $N$ much bigger than $M$.
For a function $f: \N \to \N$ we say that a $d$-VASS $V$ together with its \emph{input state} $p_\inp$,
\emph{output state} $p_\out$ and set of \emph{test counters} $T \subseteq [1,d]$ is an \emph{$f$-amplifier} if the following holds
\begin{itemize}
  \item if $p_\inp(M, x, Mx, 0^{d-3}) \reaches p_\out(v, b, y, z)$ for $v \in \N^d$ with $v[T] = 0$ then $v = 0^{d-3}$,
  $b = f(M)$ and $z = by$; and
  \item for each $y \in \N$ there exists an $x \in \N$ such that $p_\inp(M, x, Mx, 0^{d-3}) \reaches p_\out(0^{d-3}, f(M), y, f(M) \cdot y)$.
\end{itemize}
In other words, intuitively, if an amplifier inputs triples $(M, x, Mx)$ it outputs triples $(f(M), y, f(M) \cdot y)$ and moreover
each such triple can be output if an appropriate triple is delivered to the input.
In the above case we call the first three counters the \emph{input} counters and the last three counters the \emph{output} counters,
but in general we do not impose any order of the input and output counters.
Notice that notions of amplifier and generators are very much connected as suggested by the following claim.

\begin{lemma}\label{lem:amplifier-to-generator}
For any $d \geq 3$ if there exists a $d$-dimensional $f$-amplifier $V$
then exists a $d$-dimensional $f(M)$-generator of size linear in $\size(V)+\log(M)$ with the same number of test counters as $V$.
\end{lemma}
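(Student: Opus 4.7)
The plan is to obtain an $f(M)$-generator by prepending to the amplifier $V$ a tiny \emph{seed} VASS that produces the input triples $(M, x, Mx)$ that $V$ expects. Concretely, I introduce two fresh states $s_0, s_1$ and three transitions: a transition $s_0 \to s_1$ whose effect is $+M$ on the first input counter of $V$ (encoded in binary, contributing $O(\log M)$ bits); a self-loop at $s_1$ with effect $+1$ on the second input counter and $+M$ on the third input counter (another $O(\log M)$ bits); and a zero-effect transition $s_1 \to p_\inp$ handing control to $V$. Firing the first edge once and iterating the self-loop $x$ times reaches $s_1(M, x, Mx, 0^{d-3})$ for an arbitrary $x \in \N$, which is exactly the input format assumed in the definition of the amplifier.

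The candidate $f(M)$-generator is then this enlarged $d$-VASS together with the source configuration $c := s_0(0^d)$, the accepting state $q := p_\out$, and the test counters $T$ inherited verbatim from $V$. For completeness, given any $y \in \N$, the amplifier property yields some $x \in \N$ such that $p_\inp(M, x, Mx, 0^{d-3}) \reaches p_\out(0^{d-3}, f(M), y, f(M) \cdot y)$; composing with the seed phase produces a run from $c$ to the desired output configuration. For soundness, every run from $c$ to $p_\out$ must first traverse $s_0 \to s_1$, loop at $s_1$ some $x \geq 0$ times, and then cross into $p_\inp$ carrying exactly the configuration $(M, x, Mx, 0^{d-3})$ (since the seed transitions touch only these three counter positions and leave all others at zero); applying the amplifier's guarantee at $p_\out$ under the hypothesis $v[T] = 0$ then forces $v = 0^{d-3}$ and the output to be $(f(M), y, f(M) \cdot y)$ for some $y$. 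The total size is $\size(V) + O(\log M)$, and the test-counter set is $T$ itself, so both the cardinality and the linear-size requirements hold.

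There is no deep obstacle here: the lemma is essentially a compositional statement, saying that once an amplifier exists, a trivial seed suffices to bootstrap it into a generator. The only points needing care are the counter bookkeeping — namely that the seed writes precisely to the three input positions of $V$ and leaves every other counter (including any test counter that $V$ might later examine) untouched, which is immediate from the construction — and that the constant $M$ is loaded by a single binary-encoded transition so that the size bound is genuinely linear in $\log M$ rather than polynomial in $M$.
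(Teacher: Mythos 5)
Your proof is correct and matches the paper's argument essentially verbatim: both prepend a small seed VASS that nondeterministically builds the input triple $(M,x,Mx)$ (a binary-encoded transition loading $M$ plus a self-loop adding $(1,M)$ to the other two input counters) and then hands off to the amplifier, with soundness and completeness falling straight out of the amplifier's definition. The only cosmetic difference is the order of the two seed steps (the paper loops first and loads $M$ second), which is immaterial.
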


\begin{proof}
We construct the $f(M)$-generator as follows. In its initial state we have a loop with the effect $(0, 1, M, 0^{d-3})$,
thus after $x$ applications of it we get a vector $(0, x, Mx, 0^{d-3})$. Then a transition with effect $(M, 0, 0, 0^{d-3})$ leads to the input state
of the $f$-amplifier, so the $f$-amplifier inputs a tuple $(M, x, Mx, 0^{d-3})$.
Immediately from the definition of an amplifier we get that all the runs reaching the output state of the amplifier
with vectors of the form $(v, x, y, z)$ with $v \in \N^{d-3}$ such that $v[t] = 0$ for all test counters $t \in T$
fulfil $v = 0^{d-3}$, $x = f(M)$, $z = f(M) \cdot y$ and additionally such configurations for all $y \in \N$ can be reached,
which finishes the proof.
\end{proof}

Observe that taking into account Lemmas~\ref{lem:generator} and \ref{lem:amplifier-to-generator} in order to prove
Theorem~\ref{thm:main} it is enough to show the following lemma.

\begin{lemma}\label{lem:amplifier}
For each $k \geq 1$ there exists a $6k$-VASS of size exponential in $k$ (at most $C^k$ for some constant $C \in \N$)
which is an $F_k$-amplifier with at most four test counters.
\end{lemma}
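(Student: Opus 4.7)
The plan is to proceed by induction on $k$, following the natural decomposition $F_k(M) = F_{k-1}^M(1)$. For the base case $k = 1$, since $F_1(M) = 2M$, a $6$-VASS $F_1$-amplifier can be constructed directly by a short counter program that converts the input triple $(M, x, Mx)$ on three designated counters into the output triple $(2M, y, 2My)$ on three others: first transfer $M$ to a fresh counter while doubling it, then drain the pair $(x, Mx)$ using the standard multiplication-triple check to force exactness, and finally set up $(y, 2My)$ from the newly produced $2M$ by a small generator-style construction. Four test counters are easily enough to guarantee correctness.

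For the inductive step, assume $V_{k-1}$ is a $6(k-1)$-VASS $F_{k-1}$-amplifier with at most four test counters. The key observation is that by the amplifier property $V_{k-1}$ behaves as a \emph{triple-to-triple transducer}: starting from its input counters equal to $(N, x', Nx')$ and everything else zero, any successful run to $V_{k-1}$'s $p_\out$ leaves its output counters equal to $(F_{k-1}(N), y', F_{k-1}(N)\cdot y')$ and everything else zero. Accordingly I would build $V_k$ by embedding $V_{k-1}$ and adding six new counters: three for the outer input triple at positions $1,2,3$ and three for the outer output triple at positions $6k-2, 6k-1, 6k$. The control of $V_k$ is an outer loop performing exactly $M$ iterations; in each iteration the ``current'' triple is transferred into $V_{k-1}$'s input counters, $V_{k-1}$ is invoked, and the resulting output triple is transferred back to become the new current triple. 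Starting with the triple $(1, x_0, x_0)$ (encoding $N_0 = 1$), after $M$ rounds the current triple equals $(F_{k-1}^M(1), y, F_k(M)\cdot y) = (F_k(M), y, F_k(M)\cdot y)$, which is finally moved into $V_k$'s output counters. Exactly $M$ rounds is enforced by using the first outer input counter as a loop counter, decremented once per iteration and required to vanish at $p_\out$; the remaining pair $(x, Mx)$ is used to bookkeep that the work done inside each round is carried out completely.

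The main obstacle is that the amplifier property of $V_{k-1}$ has strong clean-state preconditions on every sub-invocation: all non-input counters must be zero at entry and the test counters must be zero at exit. Since VASSes cannot zero-test and $M$ is unbounded, enforcing these preconditions naively would demand a fresh test counter per sub-invocation, whereas the budget is only four for the whole induction on $k$. This is resolved by the central technique of Section~\ref{sec:zerotests}: a single accounting counter, shared across all iterations and reused at every level of the induction as one of the four test-counter slots, gathers every imperfection of the inter-iteration transfers and clean-ups and is required to vanish only at $V_k$'s $p_\out$. A complete run therefore certifies retroactively that every transfer was exact and that $V_{k-1}$'s amplifier property applies at each of the $M$ sub-invocations, from which the correctness of the final output triple follows by induction. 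Since $V_k$ is $V_{k-1}$ wrapped with a control program whose size is independent of $k$, we have $|V_k| \leq |V_{k-1}| + O(1)$, which by induction yields $|V_k| \leq C^k$ for a suitable constant $C$; the test-counter budget of four is preserved by the above reuse, completing the induction.
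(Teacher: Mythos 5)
Your high-level plan — induction on $k$ via $F_k(M) = F_{k-1}^M(1)$, iterating $V_{k-1}$ inside a wrapper and using a single accounting counter (as in Section~\ref{sec:zerotests}) to retroactively certify all inter-iteration clean-state conditions — is exactly the paper's strategy. However, the counter bookkeeping in your inductive step is broken in a way that cannot be patched without rethinking the allocation.

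You spend all six fresh counters per level on three outer-input and three outer-output positions, and propose that the accounting counter be ``shared across all iterations and reused at every level of the induction.'' This does not work. The accounting counter of $V_k$ must \emph{control} (in the sense of Lemma~\ref{lem:zero-testing}) the test counters of $V_{k-1}$ at each of the $M$ sub-invocations, and for $k-1\geq 2$ those test counters include $V_{k-1}$'s own accounting counter. A counter cannot control itself in the framework of Lemma~\ref{lem:zero-testing}, where the controlling counter is strictly the $(d+1)$-th coordinate and controls the first $d$; so $V_k$ needs a \emph{fresh} controlling counter, distinct from $V_{k-1}$'s. Moreover, realising the outer for-loop of exactly $M$ iterations and the operations $\coreadd{c}{(n+1)-i}$ — whose increments depend on the loop index $i$ and on the unbounded input value $M$ — requires zero-tests on counters bounded by $M$, implemented via the input multiplication triple, and these cost two further auxiliary counters ($y_1,y_2$ in the paper). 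The correct allocation of the six fresh counters is therefore: three for the new input triple $(i_1,i_2,i_3)$, one for the new controlling counter $c$, and two auxiliaries; the output triple of $V_k$ and the ``current'' triple handed to $V_{k-1}$ are \emph{reused} as $V_{k-1}$'s output and input counters, not freshly added. With three fresh output counters as you propose, there is no room in dimension $6k$ for $c$, $y_1$, $y_2$, and the construction cannot be completed.

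There is also a gap in the size estimate. Passing from $V_{k-1}$ to the embedded $V_{k-1}^{[i]}$ is not a mere wrapper: every operation inside $V_{k-1}$ that touches a controlled counter must be accompanied by a compensating update to $c$ of the variable magnitude $(n+1)-i$, and each such update unfolds into a constant number of genuine VASS instructions. This multiplies the size by a constant factor, giving $\size(V_k) \le C\cdot\size(V_{k-1})$ and hence $\size(V_k) \le C^k$; your claim $\size(V_k)\le\size(V_{k-1})+O(1)$ is false (the conclusion $\size(V_k)\le C^k$ would of course follow from the false additive bound, but the mechanism you invoke to establish it is not the right one).
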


The advantage of amplifiers over generators is that we can easily compose them.
Notice that having two VASSes: a $(d_1+3)$-VASS being an $f_1$-amplifier
and a $(d_2+3)$-VASS being an $f_2$-amplifier it is easy
to construct a $(d_1+d_2+3)$-VASS being an $f_1 \circ f_2$-amplifier just by using sequential composition of the $f_2$-amplifier
and $f_1$-amplifier. However, the drawback of the construction is that the dimension grows substantially. The main challenge
in the proof of Lemma~\ref{lem:amplifier} is to build amplifiers for much bigger functions from amplifiers for much smaller functions
without adding too many new counters. The proof of Lemma~\ref{lem:amplifier} is presented in Section~\ref{sec:amplifier}.

\paragraph*{Big counter values}
In order to prove $F_k$-hardness for VASS reachability problem one should in particular
construct VASSes $V_k$ in which the shortest run from some source configuration to some target configuration
has length at least $F_k(n)$, where $n = \size(V_k)$. Notice that some configuration on such a run needs to have
some counters of value at least roughly $F_k(n)$. It has been known since a long time that such VASSes exist
and it is relatively easy to construct them. The hard part is to design a VASS, in which \emph{every} run reaches
high counter values and on a very high level of abstraction one can see our construction as mainly achieving this goal.

Below we present an example family of VASSes $V_k$ in growing dimension $k+1$ with reachability sets being finite,
but which can reach values of counters up to $F_k(n)$.
Knowing this construction is absolutely not needed to understand our construction and this part of the paper
can be omitted during the reading without any harm. We present it however as we believe
that it helps to distinguish which parts of the proof of Lemma~\ref{lem:amplifier} are pretty standard and which were the real challenge.
In short words the real challenge was to force the runs to have values zero at some precise points,
we show in Section~\ref{sec:zerotests} in details how to guarantee this.

For $d = 2$ we have the following $3$-VASS $V_2$.
\begin{algorithmic}[1]
\Loop
  \Loop
    \State \add{\vr{x_1}}{2} \quad \sub{\vr{x_2}}{1}
  \EndLoop
  \Loop
    \State \sub{\vr{x_1}}{1} \quad \add{\vr{x_2}}{1}
  \EndLoop
  \State \sub{\vr{x_{3}}}{1}
\EndLoop
\end{algorithmic}

In general we construct $(k+1)$-dimensional VASS $V_k$ in the following way from $k$-dimensional VASS $V_{k-1}$.

\begin{algorithmic}[1]
\Loop
  \State $V_{k-1}$
  \Loop
    \State \sub{\vr{x_1}}{1} \quad \add{\vr{x_k}}{1}
  \EndLoop
  \State \sub{\vr{x_{k+1}}}{1} \quad \add{\vr{x_3}}{1} \quad \ldots \quad \add{\vr{x_{k-1}}}{1}
\EndLoop
\end{algorithmic}

It is quite easy to see that the reachability set of $V_k$ is finite when starting from any counter
valuation and one can show it easily by induction on $k$. Therefore it remains to show the following proposition.

\begin{proposition}
For each $k \in \N$ it holds
$(1, 0, 1^{k-2}, m-1) \trans{V_k} (F_k(m), 0^k)$.
\end{proposition}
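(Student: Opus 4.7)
My plan is to prove the proposition by induction on $k$. For the base case $k = 2$, I would directly analyze the run of $V_2$ on $(1, 0, m-1)$: the inner $V_1$ loop converts any stored value $a$ in $x_2$ into $2a$ in $x_1$, the second inner loop moves $x_1$ back to $x_2$, and each outer iteration finishes with $x_3 \mathrel{-}= 1$. Alternating doubling and transfer (while leaving exactly one token behind in $x_1$ so that $V_1$ can fire again) and skipping the transfer in the final iteration lets one accumulate $F_2(m) = 2^m$ in $x_1$ after $m - 1$ outer iterations.

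For the inductive step, I assume the claim for $V_{k-1}$. I consider one iteration of $V_k$'s outer loop entered with the invariant $x_1 = 1$, $x_2 = 0$, $x_3 = \cdots = x_{k-1} = 1$, and $x_k = y$. The inductive hypothesis applied with parameter $m' = y + 1$ says that $V_{k-1}$ can take $(1, 0, 1^{k-3}, y)$ to $(F_{k-1}(y+1), 0^{k-1})$, so the configuration after $V_{k-1}$ has $x_1 = F_{k-1}(y+1)$ and $x_2 = \cdots = x_k = 0$. Executing the inner loop exactly $F_{k-1}(y+1) - 1$ times drains $x_1$ to $1$ and raises $x_k$ to $F_{k-1}(y+1) - 1$, after which the fixed transition decrements $x_{k+1}$ and re-increments $x_3, \ldots, x_{k-1}$ back to $1$, thus restoring the loop invariant. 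So one iteration enacts $y \mapsto F_{k-1}(y+1) - 1$; writing $z_i = x_k + 1$ after $i$ iterations, the recurrence $z_i = F_{k-1}(z_{i-1})$ with $z_0 = 2$ yields $z_i = F_{k-1}^{i+1}(1) = F_k(i+1)$. After $m - 2$ iterations we therefore have $x_k = F_k(m-1) - 1$ and $x_{k+1} = 1$. In the final iteration, $V_{k-1}$ produces $F_{k-1}(F_k(m-1)) = F_k(m)$ in $x_1$; by skipping the inner transfer loop this value remains in $x_1$, and the fixed transition drives $x_{k+1}$ to $0$.

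The point I expect to be the main obstacle is bookkeeping in this last iteration: the fixed transition at the end of the outer body re-increments $x_3, \ldots, x_{k-1}$ after $V_{k-1}$ has zeroed them, so showing that the \emph{literal} target $(F_k(m), 0^k)$ is reached (rather than a configuration agreeing with it only in the first coordinate) requires either a slightly more careful choice of the final iteration's internal behaviour or an additional cleanup argument that absorbs the residual ones using fragments of $V_{k-1}$ on the appropriate prefix. Apart from this endgame subtlety, the induction transparently mirrors the recursion $F_k(m) = F_{k-1}^{m}(1)$.
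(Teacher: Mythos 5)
Your inductive argument is the same one the paper uses: a full pass of the outer loop of $V_k$ enacts $x_k \mapsto F_{k-1}(x_k+1)-1$ while decrementing $x_{k+1}$, and unrolling this yields the $F_k$ growth. Your reparametrization $z_i = x_k+1$, $z_0 = 2$, $z_i = F_{k-1}(z_{i-1})$ is in fact tidier than the paper's displayed chain of $\trans{(2\text{--}5)}$ steps, whose first line has an off-by-one artifact.

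The endgame worry you raise is legitimate, and the paper's own write-up does not actually escape it: the paper's chain ends with a bare $\trans{(2)}$, i.e.\ it tacitly stops the run right after $V_{k-1}$ inside the last pass of the outer body, but the $\loopp$ semantics of Section~2 let you exit a loop only at the loop-control state, which is reached only after a \emph{complete} body pass. The last instruction of that body both requires $x_{k+1}\geq 1$ and re-increments $x_3,\dots,x_{k-1}$, so a maximal run actually exits the outer loop at $(F_k(m),0,1^{k-3},0,0)$ rather than $(F_k(m),0^k)$, and since $x_{k+1}=0$ there, no further body pass (and hence no cleanup) is available. The base case has a similar slip: from $(1,0,m-1)$ the extremal reachable configuration is $(2^{m-2},0,0)$, not $(2^m,0,0)$. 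So your suspicion that the literal target is not quite reachable is correct, and the extra ``cleanup argument'' you hope for cannot exist; the statement itself needs a small adjustment of the target (or re-indexing of the source), which does not affect the example's point. One minor misreading on your side: $V_2$ contains no sub-program $V_1$ --- it is given directly with two inner loops --- and the ``leave one token in $x_1$'' device is needed only from $k\geq 3$ on; in $V_2$ you may drain $x_1$ completely in the transfer loop.
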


\begin{proof}
We show the proposition by induction on $k$.
We start the induction from $k  = 2$, one can easily see  that indeed $(1, 0, m-1) \trans{V_2} (2^m, 0, 0)$.
For the induction step assume that $(1,0,1^{k-3},m-1) \trans{V_{k-1}} (F_{k-1}(m), 0^{k-1})$.
We therefore have for any $\ell \in \N$ that
\begin{align*}
(1, 0, 1^{k-3}, m-1, \ell) & \trans{(2)} (F_{k-1}(m), 0^{k-1}, \ell) \trans{(3-4)} (1, 0^{k-2}, F_{k-1}(m)-1, \ell) \\
& \trans{(5)} (1, 0, 1^{k-3}, F_{k-1}(m)-1, \ell-1),
\end{align*}
where by $\trans{(i)}$ we denote the transformation on counters caused by line $i$ of program $V_k$.
Therefore we have
\begin{align*}
(1, 0, 1^{k-3}, 1, m-1) & \trans{(2-5)} (1, 0, 1^{k-3}, F_{k-1}(1)-1, m-1) \\
& \trans{(2-5)} (1, 0, 1^{k-3}, F_{k-1} \circ F_{k-1}(1)-1, m-2) \\
& \trans{(2-5)} \ldots \trans{(2-5)} (1, 0, 1^{k-3}, \underbrace{F_{k-1} \circ \ldots \circ F_{k-1}}_{m-1}(1)-1, 0) \\
& \trans{(2)} (\underbrace{F_{k-1} \circ \ldots \circ F_{k-1}}_m(1), 0^k) = (F_k(m), 0^k).
\end{align*}
which finishes the induction step.
\end{proof}

\section{Zero-tests}\label{sec:zerotests}
The main contribution of this paper is a novel technique of zero-testing, which allows for performing many zero-tests
simultaneously. This will be the key technique in the proof of Lemma~\ref{lem:amplifier}.
We prefer to introduce it mildly before using it in Section~\ref{sec:amplifier} and present first how it works on a few simple examples.
Already on these examples its power is visible.

\begin{example}
Imagine first that we are given a VASS run $\rho$ and we want to test some counter $x$ for being exactly zero in three moments
along this run: in configurations $c_1$, $c_2$ and $c_3$. Assume that value of $x$ is zero at the beginning of $\rho$.
Let value of $x$ in these configurations be $x_1$, $x_2$ and $x_3$, respectively.
A naive way to solve this problem is to add three new counters, which are copies of $x$, but the first one stops
copying effects of transitions on $x$ in $c_1$, the second one in $c_2$ and the third one in $c_3$.
In that way the additional counters keep values of $x_1$, $x_2$ and $x_3$ till the end of the run and can be checked
there for zero (just by setting the target configuration to zero on these counters).
We show here how to perform these three zero-tests using just one additional counter, we call it the \emph{controlling counter}
and say that it \emph{controls} the other counters.
Let $\rho_1$ be the part of $\rho$ before $c_1$, $\rho_2$ the part in between $c_1$ and $c_2$ and $\rho_3$
the part in between $c_2$ and $c_3$.
Let $y_1$, $y_2$ and $y_3$ be the effects of $\rho_1$, $\rho_2$ and $\rho_3$ on $x$, respectively.
We can easily see that $x_1 = y_1$, $x_2 = y_1 + y_2$ and $x_3 = y_1 + y_2 + y_3$.
Notice that we can check whether all the $x_1$, $x_2$ and $x_3$ are equal to zero
by checking whether its sum $x_1 + x_2 + x_3$ equals zero,
as all the $x_i$ are nonnegative. We have $x_1 + x_2 + x_3 = 3 y_1 + 2 y_2 + y_3$, so it is enough to check
whether $3 y_1 + 2 y_2 + y_3 = 0$. Instead of adding three new counters we add only one,
which computes the value $3 y_1 + 2 y_2 + y_3$.
We realise it in the following way.
Every increase of $x$ by $a$ during $\rho_1$ is reflected by an increase of the controlling counter by $3a$.
Similarly, an increase of $x$ by $a$ on $\rho_2$ is reflected by an increase of the controlling counter by $2a$ and on $\rho_3$ just by $a$.
After configuration $c_3$ the controlling counter is not modified.
Therefore testing the controlling counter for $0$ at the end of the run (by setting appropriately the target configuration on that counter)
checks indeed whether $x_1 = x_2 = x_3 = 0$.
\end{example}

This approach can be generalised to more zero-tests and moreover to zero-tests on different counters,
as shown by the following lemma. In the lemma the aim of the last (controlling) counter is to allow
for zero-testing the $i$-th counter on configurations with indices in $S_i$, for all $i \in [1,d]$.
Notice also that configurations $c_i$ are not necessarily all the configurations on the run $\rho$, but some subset,
which can be restricted only to those in which we want to perform some zero-test.

\begin{lemma}\label{lem:zero-testing}
Let $\src \trans{\rho} \trg$ be a run of a $(d+1)$-VASS $V$
and let $\src = c_0, c_1, \ldots, c_{n-1}, c_n = \trg$ be some of the configurations on $\rho$.
Let $\rho_j$ for $j \in [1,n]$ be the parts of the run $\rho$ starting in $c_{j-1}$ and finishing in $c_j$, namely
\[
c_0 \trans{\rho_1} c_1 \trans{\rho_2} \ldots \trans{\rho_{n-1}} c_{n-1} \trans{\rho_n} c_n.
\]
Let $S_1, \ldots, S_d \subseteq [0,n]$ be the sets of indices of $c_j$, in which we want to zero-test counters numbered $1, \ldots, d$, respectively
and let $N_{j,i} = |\{k \geq j \mid k \in S_i\}|$ for $i \in [1,d], j \in [0,n]$ be the number of zero-tests, which we want to perform
on the $i$-th counter starting from configuration $c_j$ (in other words after the run $\rho_j$ for $j > 0$).
Then if:
\begin{enumerate}
  \item[(1)] $\src[d+1] = \sum_{i=1}^d N_{0,i} \cdot \src[i]$;
  \item[(2)] for each $j \in [1,n]$ we have $\eff(\rho_j, d+1) = \sum_{i=1}^d N_{j,i} \cdot \eff(\rho_j, i)$; and
  \item[(3)] $\trg[d+1] = 0$
\end{enumerate}
then for each $i \in [1,d]$ and for each $j \in S_i$ we have $c_j[i] = 0$.
\end{lemma}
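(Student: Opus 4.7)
My plan is to introduce the potential $g(j) := c_j[d+1] - \sum_{i=1}^{d} N_{j,i} \cdot c_j[i]$ for $j \in [0,n]$ and to show that $g$ is identically zero; all of the desired zero-tests will then drop out of this single identity together with the non-negativity of the $c_j[i]$ and of the coefficients $N_{j,i}$. The three hypotheses of the lemma line up with three properties of $g$: condition~(1) is exactly $g(0) = 0$; condition~(3) together with $c_n[i], N_{n,i} \geq 0$ gives $g(n) \leq 0$; and I claim condition~(2) forces $g$ to be non-decreasing. Once the monotonicity is in place, the chain $0 = g(0) \leq g(j) \leq g(n) \leq 0$ collapses to $g \equiv 0$.

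For the monotonicity I would expand $c_j[i] = c_{j-1}[i] + \eff(\rho_j, i)$ in every coordinate $i \in [1,d+1]$ and plug condition~(2) into the $(d+1)$-st coordinate. The contributions of $\eff(\rho_j, i)$ cancel and what remains is
\[
g(j) - g(j-1) \;=\; \sum_{i=1}^{d} (N_{j-1,i} - N_{j,i}) \cdot c_{j-1}[i].
\]
Because $N_{j,i} = |\{k \geq j \mid k \in S_i\}|$, the difference $N_{j-1,i} - N_{j,i}$ equals $1$ if $j-1 \in S_i$ and $0$ otherwise, so the right-hand side is $\sum_{i \,:\, j-1 \in S_i} c_{j-1}[i] \geq 0$, yielding the monotonicity.

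Knowing $g \equiv 0$ now delivers the lemma. For every $j \in [1,n]$ the vanishing of $g(j) - g(j-1)$ forces $\sum_{i \,:\, j-1 \in S_i} c_{j-1}[i] = 0$, hence $c_{j-1}[i] = 0$ for every $i$ with $j-1 \in S_i$; as $j-1$ ranges over $[0,n-1]$ this handles every index $j' \in S_i$ with $j' < n$. The remaining case $j' = n$ follows from $g(n) = 0$ and condition~(3): they give $\sum_i N_{n,i} c_n[i] = 0$, and since $N_{n,i} = 1$ exactly when $n \in S_i$, we also get $c_n[i] = 0$ for $n \in S_i$. I do not anticipate a real obstacle beyond guessing the correct potential; after that the argument is a short telescoping computation, and the only subtle point is to notice that the cutoff ``$k \geq j$'' in the definition of $N_{j,i}$ is tailored so that the jump $N_{j-1,i} - N_{j,i}$ singles out precisely the test at configuration $c_{j-1}$.
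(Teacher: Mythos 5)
Your proof is correct, and it takes a recognizably different route from the paper's. The paper argues globally: it sets $\textsc{sum} = \sum_{i \in [1,d],\, j \in S_i} c_j[i]$, expands each $c_j[i]$ as $c_0[i] + \sum_{k=1}^{j}\eff(\rho_k,i)$, observes that $c_0[i]$ occurs $N_{0,i}$ times and $\eff(\rho_k,i)$ occurs $N_{k,i}$ times in the resulting double sum, and then invokes (1)--(3) to rewrite $\textsc{sum}$ as $\src[d+1] + \sum_k \eff(\rho_k,d+1) = \trg[d+1] = 0$; nonnegativity of the summands then kills every term at once. You instead track the running deficit $g(j) = c_j[d+1] - \sum_i N_{j,i}\,c_j[i]$ and show it is a non-decreasing potential pinned to zero at both ends, so every increment $g(j)-g(j-1) = \sum_{i:\,j-1\in S_i} c_{j-1}[i]$ vanishes individually. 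The two arguments are the same telescope read from opposite ends — indeed $g(n)-g(0)$ is exactly the paper's $\textsc{sum}$ up to the boundary terms at $j=n$, which you recover separately from $g(n)=0$ and (3) — but your packaging localizes the conclusion to each step, which makes the extraction of $c_j[i]=0$ slightly more transparent, at the cost of the one extra case split for $j=n$ that the paper's one-shot sum avoids. Both are short and self-contained; neither has an advantage in generality.
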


Before we proceed with the proof of Lemma~\ref{lem:zero-testing} we comment a bit on the lemma
and see how the example above fulfils its conditions. The condition (1) assures that at the configuration $\src$
the controlling $(d+1)$-th counter is appropriately related to the other ones. In the example it is trivially fulfilled by the equation $0 = 0$.
The condition (2) assures that the effect of each part of the run $\rho$ is appropriately reflected by the controlling counter.
On the example it is fulfilled as each change of the $x$-counter by $a$ on $\rho_1$, $\rho_2$ and $\rho_3$
is reflected by the change of controlling counter by $3a$, $2a$ and $a$, respectively.
Notice that in the case of our example the condition (2) is satisfied even if we set $\rho_j$ to be single transitions
and $n$ equals length of the run.
However in many applications of Lemma~\ref{lem:zero-testing} condition (2) is true only if we set $\rho_j$ to be some groups of transitions
forming a longer sub-run of $\rho$, this is why we formulate the lemma in such a stronger version.
Condition (3) assures that the controlling counter is indeed zero at the end of the run.

\begin{proof}
Notice first that in order to check whether for each $i \in [1,d]$ and for each $j \in S_i$ we have $c_j[i] = 0$ it is enough
to check whether the sum of all $c_j[i]$ is zero, namely to check whether
$\textsc{sum} = \sum_{i \in [1,d], j \in S_i} c_j[i] = 0$.

For each $i \in [1,d]$ and $j \in S_i$ the value $c_j[i]$ is the sum of the initial value of the $i$-th counter
and effects of all the runs $\rho_k$ before configuration $c_j$ on the $i$-th counter.
In other words $c_j[i] = c_0[i] + \sum_{k=1}^j \eff(\rho_k, i)$.
We therefore have
\[
\textsc{sum} = \sum_{i \in [1,d], j \in S_i} c_j[i] =  \sum_{i \in [1,d], j \in S_i} (c_0[i] + \sum_{k=1}^j \eff(\rho_k, i)).
\]
In the rightmost expression $c_0[i]$ occurs exactly $N_{0,i}$ times and each $\eff(\rho_k, i)$ occurs exactly $N_{k,i}$ times.
Therefore
\[
\textsc{sum} = \src[d+1] + \sum_{k=1}^n \eff(\rho_k, d+1) = \trg[d+1],
\]
where the first equation follows from (1) and (2).
By (3) we have $\trg[d+1] = 0$ which implies that indeed $\textsc{sum} = 0$ and finishes the proof.
\end{proof}

Below we present two examples of the application of Lemma~\ref{lem:zero-testing}:
an example of a $3$-VASS with transitions represented in unary (shortly unary $3$-VASS) with exponential shortest run
and an example of a $7$-VASS with transitions represented in binary (shortly binary $7$-VASS) with doubly-exponential shortest run.
Low dimensional unary VASSes with exponential shortest runs and binary VASSes
with doubly-exponential shortest run have been presented in~\cite{DBLP:conf/concur/Czerwinski0LLM20}.
We present here the new examples in order to illustrate our technique, but also to provide
another set of nontrivial VASS examples in low dimensions.

\begin{example}\label{ex:3VASS}
We consider the $2$-VASS from Example~\ref{ex:2VASS} with one controlling counter added, the counter $c$.
We analyse runs starting from all zeros and finishing in all zeros.
We will observe that adding the controlling counter forces the loops in lines 3-4 (or states $p_i$) and in lines 5-6 (or states $q_i$)
to be performed maximal possible number of times, namely the $x$ counter is zero when run leaves line 4 and the $y$ counter is zero
when run leaves line 6. We have added comments in the program below to emphasise where which counters are forced to be zero.
This forces the run to visit configuration $(2^n, 0, 0)$ in line 6 and therefore to be exponential in VASS size.
\renewcommand{\algorithmiccomment}[1]{\bgroup\hfill//~#1\egroup}
\begin{algorithmic}[1]
\State \add{\vr{x}}{1} \quad \add{\vr{c}}{n}
\For {\, $i$ \, := \, $1$ \, \textbf{to } $n$}\label{l:for}
\Loop \label{l:weakly1}
\State \sub{\vr{x}}{1} \quad \add{\vr{y}}{1} \algorithmiccomment{zero-test on $x$ after the loop}
\EndLoop
\Loop
\State \add{\vr{x}}{2} \quad \sub{\vr{y}}{1} \quad \add{\vr{c}}{n-i-1} \algorithmiccomment{zero-test on $y$ after the loop}
\EndLoop \label{l:weakly2}
\EndFor
\Loop
\State \sub{\vr{x}}{1}
\EndLoop
\end{algorithmic}

\noindent
To see the comparison with Example~\ref{ex:2VASS} we also present our VASS in a more traditional way.
For simplicity we do not write labels if they are vectors with all entries being zero.

\begin{tikzpicture}[->,>=stealth',shorten >=1pt,auto,node distance=2.2cm,semithick]
\node (s) {$s$};
\node (p1) [right of=s] {$p_1$};
\node (q1) [right of=p1] {$q_1$};
\node (b) [right of=q1] {$\ldots$};
\node (pn) [right of=b] {$p_n$};
\node (qn) [right of=pn] {$q_n$};
\node (t) [right of=qn] {$t$};

\path[->]
(p1) edge[->, in=50, out=130, min distance=0.5cm,loop] node {$(-1,1,0)$}(p1)
(q1) edge[->, in=50, out=130, min distance=0.5cm,loop] node {$(2,-1,n-2)$}(q1)
(pn) edge[->, in=50, out=130, min distance=0.5cm,loop] node {$(-1,1,0)$}(pn)
(qn) edge[->, in=50, out=130, min distance=0.5cm,loop] node {$(2,-1,-1)$}(qn)
(t) edge[->, in=50, out=130, min distance=0.5cm,loop] node {$(-1,0,0)$}(t)

(s) edge[->] node[above] {$(1,0,n)$} (p1)
(p1) edge[->] (q1)
(q1) edge[->] (b)
(b) edge[->] (pn)
(pn) edge[->] (qn)
(qn) edge[->] (t);
\end{tikzpicture}

\noindent
Any run from $s(0,0,0)$ to $t(0,0,0)$ crosses through states $p_1, \ldots, q_n$ and therefore can be divided
into $2n+1$ parts $\rho_1, \ldots, \rho_{2n+1}$ by cutting in the last configurations in appropriate states.
We want the counter $x$ to be zero-tested in states $p_1, \ldots, p_n$
and counter $y$ to be zero-tested in states $q_1, \ldots, q_n$, so we set
$S_1 = \{1, 3, \ldots, 2n-1\}$
and $S_2 = \{2, 4, \ldots, 2n\}$.
One can easily check that the considered VASS fulfils conditions of Lemma~\ref{lem:zero-testing}.
Condition (1) is trivially fulfilled as $0 = 0$ and condition (3) is fulfilled as we demand to reach $t(0,0,0)$,
so the controlling counter is equal to zero at the target configuration.
In order to see that condition (2) is also fulfilled we show that it is even fulfilled for each transition (not only
for runs in between of zero-tests). Let us consider four lines: 1, 4, 6 and 8.
In line 1 counter $x$ is awaiting $n$ zero-tests, so increase of $x$ by $1$ should
be reflected by increase of $c$ by $n$. In line 4 counter $x$ is awaiting $(n+1)-i$ zero-tests and so similarly counter $y$.
Therefore in line 4 counter $c$ should be increased by $(n+1-i) \cdot 1 + (n+1-i) \cdot (-1) = 0$.
In line 6 counter $x$ is awaiting $n-i$ zero-tests, while counter $y$ is awaiting $(n+1)-i$ zero-tests, therefore counter $c$
should be increased here by $(n-i) \cdot 2 + (n+1-i) \cdot (-1) = 2n - 2i - n - 1 + i = n-1-i$. In line 8 counter $x$ is not awaiting for
any zero-test, so counter $c$ should not be changed because of its changes.
Summarising all of that we see that $c = 0$ at the target configuration forces the run to perform maximal number of times
loops in the states $p_i$ and $q_i$ and therefore there is only one run of our $3$-VASS, which in particular visits the configuration
$q_n(2^n, 0, 0)$ and has exponential length.
\end{example}

\begin{example}\label{ex:7VASS}
Here we present an example of a binary $7$-VASS with doubly-exponential shortest run.
This result is not needed in the proof of Lemma~\ref{lem:amplifier}, we show it however in order to illustrate how
to use the technique of performing many zero-tests in the case when the number of zero-tests is bigger then the size of the VASS.
In Example~\ref{ex:3VASS} the number of zero-tests both on $x$ and $y$ counters was comparable to the size of the VASS.
Therefore different behaviour of controlling $c$ in different phases of the run could be implemented
by different behaviour of $c$ in different states. In the current example this is not possible.
Let us recall the well known Hopcroft-Pansiot example of a $3$-VASS from~\cite{DBLP:journals/tcs/HopcroftP79}.
As $n$ is given in binary it can have a doubly-exponential run.
\begin{algorithmic}[1]
\State \add{\vr{x}}{1} \quad \add{\vr{z}}{n}
\Loop
\Loop
\State \sub{\vr{x}}{1} \quad \add{\vr{y}}{1}
\EndLoop
\Loop
\State \add{\vr{x}}{2} \quad \sub{\vr{y}}{1}
\EndLoop
\State \sub{\vr{z}}{1}
\EndLoop
\end{algorithmic}
We can observe that there is a run which finishes with counter values $(x, y, z) = (2^n, 0, 0)$ and $2^n$ is doubly-exponential
wrt. the VASS size. Notice however that nothing forces the run to reach so high value of $x$.
Our aim is now to add a controlling counter $c$ which would force the loops in lines 3-4 and in lines 5-6 to be applied maximal number of times. In the case when $z = 0$ at the end of the run we know that the main loop in lines 2-7 is executed exactly $n$ times,
therefore we want to test both $x$ and $y$ exactly $n$ times for zero. It is easy to observe that in lines 3-4 both
counters are awaiting $z$ zero-tests and in lines 5-6 counter $x$ is awaiting $(z-1)$ zero-tests, while $y$ is awaiting $z$ zero-tests.
Therefore the correct updates on controlling counter $c$ should be: increase by $0$ in line $4$, and increase by $(z-1) \cdot 2 + z \cdot (-1) = z-2$ in line $6$. The counter program thus should be the following.
\begin{algorithmic}[1]
\State \add{\vr{x}}{1} \quad \add{\vr{z}}{n} \quad \add{\vr{c}}{n}
\Loop
\Loop
\State \sub{\vr{x}}{1} \quad \add{\vr{y}}{1}
\EndLoop
\Loop
\State \add{\vr{x}}{2} \quad \sub{\vr{y}}{1} \quad \add{\vr{c}}{\vr{z}-2}
\EndLoop
\State \sub{\vr{z}}{1}
\EndLoop
\end{algorithmic}
One can however easily observe that the operation \add{\vr{c}}{\vr{z}-2} is not a valid VASS operation, as $\vr{z}-2$ is not a constant.
Fortunately counter $\vr{z}$ is a counter bounded by $n$ and in that case we can implement this operation using only VASS operations.
Very intuitively in order to implement \add{\vr{c}}{\vr{z}} it is enough to decrement $\vr{z}$ from its current value to value zero
and simultaneously increment value of counter $\vr{c}$. In order to be able to restore the original value of $\vr{z}$ and to be able
to check whether $\vr{z}$ reached zero we need to add auxiliary counters.
We add three additional counters $\vr{z'}$, $\vr{\bar{z}}$ and $\vr{\bar{z}'}$ such that all the time after line $1$ we keep
invariants $\vr{z} + \vr{\bar{z}} = n$ and $\vr{z'} + \vr{\bar{z}'} = n$. Notice that with those invariants we can easily check whether
$\vr{z} = 0$ just by performing two operations: \sub{\vr{\bar{z}}}{n} and then \add{\vr{\bar{z}}}{n}, similarly we test whether $\vr{z'} = 0$.
We introduce a macro for these operations here, writing \zerotest{\vr{z}} and \zerotest{\vr{z'}}.
These zero-tests should not be confused with zero-tests on counters $\vr{x}$ and $\vr{y}$. Notice that they are of a very different
nature and we implement \zerotest{\vr{z}} without using controlling counter $\vr{c}$, but by the use of the fact that $\vr{z}$ is $n$-bounded.

Therefore in line 1 we have now additionally operation \add{\vr{\bar{z}'}}{n},
in line 7 additionally operation \add{\vr{\bar{z}}}{1}
and in line 6 instead of operation \add{\vr{c}}{\vr{z}-2} we place the following code of VASS
\begin{algorithmic}[1]
\Loop
\State \add{\vr{c}}{1} \quad \sub{\vr{z}}{1} \quad \add{\vr{z'}}{1}
\State \add{\vr{\bar{z}}}{1} \quad \sub{\vr{\bar{z}'}}{1}
\EndLoop
\State \zerotest{\vr{z}}
\Loop
\State \add{\vr{z}}{1} \quad \sub{\vr{z'}}{1}
\State \sub{\vr{\bar{z}}}{1} \quad \add{\vr{\bar{z}'}}{1}
\EndLoop
\State \zerotest{\vr{z'}}
\State \sub{\vr{c}}{2}
\end{algorithmic}
The aim of line 2 is to perform \add{\vr{c}}{\vr{z}} and keep $\vr{z} + \vr{z'}$ constant. We need to keep $\vr{z} + \vr{z'}$ constant
to be able to reconstruct later the original value of $\vr{z}$.
In line 3 we update $\vr{\bar{z}}$ and $\vr{\bar{z}'}$ to keep the invariants and in line 4 we check whether indeed we have added
everything from $\vr{z}$ to $\vr{c}$. Lines 5-8 are devoted to moving values of $\vr{z}$ and $\vr{z'}$ back to original ones,
while the line 9 takes care of subtracting $2$ from $\vr{c}$, as our aim is to perform \add{\vr{c}}{\vr{z}-2}, not \add{\vr{c}}{\vr{z}}.

One can easily check that Lemma~\ref{lem:zero-testing} applies to our situation when one defines parts of the run $\rho_j$
corresponding to the single lines of the considered $7$-VASS.
Therefore if we demand $c = 0$ and $z = 0$ after the line 7 of the $7$-VASS then both the loops in lines 3-4 and in lines 5-6 have to be executed
maximally each time.
Notice that here we need to consider run fragments $\rho_j$ in Lemma~\ref{lem:zero-testing} to be longer than single transitions.
For example all the operations \add{\vr{x}}{2}, \sub{\vr{y}}{1} and \add{\vr{c}}{\vr{z}-2}
should be contained in one fragment $\rho_j$ and operation \add{\vr{c}}{\vr{z}-2} is implemented as long sequence of transitions, as presented above.
Summarising, by Lemma~\ref{lem:zero-testing} at the end of the run we have $x = 2^n$, which is indeed doubly-exponential in VASS size
since $n$ is encoded in binary.
\end{example}

\section{Amplifiers}\label{sec:amplifier}
This section is devoted to the proof of Lemma~\ref{lem:amplifier}.
Recall that we need to show that for each $k \geq 1$ there exists a $6k$-VASS of size exponential in $k$
which is an $F_k$-amplifier. We prove it by induction on $k$ with induction assumption additionally strengthened
by the fact that the test counters include all the input counters and at most one additional counter.
For $k = 1$ it is not hard to construct a $6$-VASS, which is an $F_1$-amplifier, recall that $F_1(n) = 2n$.
The following VASS realises our goal, the input counters are $x_1$, $x_2$ and $x_3$,
the output counters are $x_4$, $x_5$ and $x_6$ and the test counters are only the input counters.

\begin{algorithmic}[1]
\Loop
  \State \sub{\vr{x_2}}{2} \quad \add{\vr{x_5}}{1}
  \Loop
    \State \sub{\vr{x_1}}{1} \quad \add{\vr{x_4}}{1} \quad \sub{\vr{x_3}}{1} \quad \add{\vr{x_6}}{1}
  \EndLoop
  \Loop
    \State \add{\vr{x_1}}{1} \quad \sub{\vr{x_4}}{1} \quad \sub{\vr{x_3}}{1} \quad \add{\vr{x_6}}{1}
  \EndLoop
\EndLoop
\State \sub{\vr{x_2}}{1}
\Loop
  \State \sub{\vr{x_1}}{1} \quad \add{\vr{x_4}}{2} \quad \sub{\vr{x_3}}{1}
\EndLoop
\end{algorithmic}

Lines 1-6 are devoted to set the correct values of $x_5$ and $x_6$,
while lines 7-9 set the correct value of $x_4$.
Assume that at the input we have $(x_1, x_2, x_3) = (n, x, nx)$ and recall that initially $x_4 = x_5 = x_6 = 0$.
The proof idea is similar as in the proof of Lemma~\ref{lem:generator}, triple $(n, x, nx)$ is used
to perform exactly $x$ sequences of exactly $n$ actions.
Observe first that until line 8 the sum $x_1 + x_4$ does not change, thus we have $x_1 + x_4 = n$.
This means that loops in lines 3-4, 5-6 and 8-9 all can be fired at most $n$ times. Each such a loop
corresponds to one operation \sub{\vr{x_2}}{1} (in lines 2 or 7) and at most $n$ operations
\sub{\vr{x_3}}{1} (in lines 4, 6 and 9). This means that in order to reach $x_3 = 0$ at the end of the run
each loop has to be fired exactly $n$ times. Moreover the loop in lines 1-6 has to be fired exactly $\frac{x-1}{2}$
times, as the final value of $x_2$ also needs to be $0$.
Therefore final values of $(x_4, x_5, x_6)$ are $(2n, \frac{x-1}{2}, 2n \cdot \frac{x-1}{2})$, which finishes the proof for $k=1$.

For an induction step assume that $V_{k-1}$ is a $(6k-6)$-dimensional VASS of size exponential in $k-1$ and an $F_{k-1}$-amplifier.
We aim at constructing a $6k$-VASS, which is an $F_k$-amplifier and its size is at most $C \cdot \size(V_{k-1})$
for some constant $C$ which does not depend on $k$.
The idea to obtain $F_k$-amplifier is the following: start from the triple $(1, x, x)$ and apply the $F_{k-1}$-amplifier
$n$ times in a row, where $n$ is the input value. The main challenge is to achieve it without adding new counters
for each application. We show here how we obtain it by adding only six new counters. We crucially rely on
the Lemma~\ref{lem:zero-testing}.

The $F_k$-amplifier has the following $6k$ counters: input triple $(i_1, i_2, i_3)$,
output triple $(o_1, o_2, o_3)$, an auxiliary triple $(s_1, s_2, s_3)$, controlling counter $c$, two auxiliary
counters $y_1$, $y_2$ and $6k-12$ counters, which are the counters of $V_{k-1}$ being neither its input nor output counters.
The triple $(s_1, s_2, s_3)$ will be used inside the $F_k$-amplifier as an input triple of a $F_{k-1}$-amplifier.
The test counters are the input counters $(i_1, i_2, i_3)$ and the controlling counter $c$.

We first present the code for an $F_k$-amplifier $V_k$, which uses illegal constructions
like ''\textbf{for }$i$:= $1$ \textbf{to} $n$ \textbf{do}''
or ''\add{\vr{c}}{z}'' where $z$ is current value of another counter, in order to provide an intuition what $V_k$ does.
Then we show how we can implement the mentioned constructions using only legal VASS operations.
Assume that input counter values on $(i_1, i_2, i_3)$ are $(n, x, nx)$. Thus
we aim at producing on output counters $(o_1, o_2, o_3)$ values $(F_k(n), m, F_k(n) \cdot m)$ for some $m \in \N$.
Let $V^{[i]}_{k-1}$ be the modified version of $V_{k-1}$ in which the controlling counter $c$
is also appropriately modified: each modification \add{\vr{x}}{1} (or \sub{\vr{x}}{1})
for counter $\vr{x}$ being any output or test counter of $V_{k-1}$
is accompanied with a modification \add{\vr{c}}{(n+1)-i} (or \sub{\vr{c}}{(n+1)-i}).
It is important to emphasise that all the counters of $V^{[i]}_{k-1}$ as well as the counter $\vr{c}$
are shared among $V^{[i]}_{k-1}$ for different $i$. Recall that the test counters of $V^{[i]}_{k-1}$ contain
all the input counters $s_1$, $s_2$, $s_3$ and possibly one additional counter.
\begin{algorithmic}[1]
\State \add{\vr{s_1}}{1}  \quad \add{\vr{c}}{n}
\Loop
\State \add{\vr{s_2}}{1} \quad \add{\vr{s_3}}{1} \quad \add{\vr{c}}{2n}
\EndLoop
\For {\, $i$ \, := \, $1$ \, \textbf{to } $n$}\label{l:for}
  \State $V^{[i]}_{k-1}(s_1, s_2, s_3, o_1, o_2, o_3)$
  \Loop
    \State \sub{\vr{o_1}}{1} \quad \add{\vr{s_1}}{1} \quad \sub{\vr{c}}{1}
  \EndLoop
  \Loop
    \State \sub{\vr{o_2}}{1} \quad \add{\vr{s_2}}{1} \quad \sub{\vr{c}}{1}
  \EndLoop
  \Loop
    \State \sub{\vr{o_3}}{1} \quad \add{\vr{s_3}}{1} \quad \sub{\vr{c}}{1}
  \EndLoop
\EndFor
\end{algorithmic}

\begin{claim}
The above counter program is an $F_k$-amplifier.
\end{claim}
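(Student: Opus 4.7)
The plan is to prove the claim in two directions.  First (completeness), I would show that for every $y \in \N$ there is some $x \in \N$ such that the program, launched with $(i_1, i_2, i_3) = (n, x, nx)$ and all other counters at $0$, can reach its output state with $(F_k(n), y, F_k(n) \cdot y)$ on its output triple --- which, reading the code, is $(s_1, s_2, s_3)$, since the last action of each iteration is a transfer loop emptying the $o$-triple back into the $s$-triple.  Starting from $s_1 = 1$ and $(s_2, s_3) = (\alpha, \alpha)$ produced by the initialization on lines 1--3, I would choose, using the completeness clause of the induction hypothesis on $V_{k-1}$, a witnessing input guessed inside each $V^{[i]}_{k-1}$ that makes its output triple equal the desired intermediate triple; the transfer loops then carry it back to $s$, setting up iteration $i{+}1$.

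Second (soundness) is the real content.  I would apply Lemma~\ref{lem:zero-testing} to the controlling counter $c$, declaring the following zero-test checkpoints:
\begin{itemize}
  \item at the end of each call $V^{[i]}_{k-1}$, every test counter of $V^{[i]}_{k-1}$ --- which by the induction hypothesis is $s_1, s_2, s_3$ together with at most one additional counter --- is required to vanish;
  \item at the end of each of the three transfer loops inside iteration $i$, the corresponding $o_j$ is required to vanish.
\end{itemize}
Condition (1) of Lemma~\ref{lem:zero-testing} holds trivially since everything starts at $0$; condition (3) holds because $c$ is one of the test counters of $V_k$ and must therefore be $0$ in $\trg$.  For condition (2) I would verify, transition by transition, that the weights hard-coded in the program match the remaining-test counts $N_{j,\cdot}$ at the moment of that transition.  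Before the for-loop the count is $n$ on each of $s_1, s_2, s_3$, which justifies the weights $n$ on line 1 and $2n$ in the body of the initialization loop.  Throughout $V^{[i]}_{k-1}$ the count on every test or output counter of $V_{k-1}$ equals $n{+}1{-}i$, which is exactly the uniform weight built into $V^{[i]}_{k-1}$ by definition.  In the three transfer loops of iteration $i$, the count on $o_j$ is still $n{+}1{-}i$ while the count on $s_j$ has already dropped to $n{-}i$ (its iteration-$i$ test fired at the end of $V^{[i]}_{k-1}$); the difference of $1$ yields exactly the $-1$ per step appearing on $c$ in the transfer loops.

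With condition (2) verified and the zero-tests supplied by Lemma~\ref{lem:zero-testing}, the induction closes cleanly.  Vanishing of the test counters of $V^{[i]}_{k-1}$ activates the induction hypothesis on $V_{k-1}$: the $o$-triple at the end of $V^{[i]}_{k-1}$ equals $(F_{k-1}(s_1^{\text{prev}}), y_i, F_{k-1}(s_1^{\text{prev}}) \cdot y_i)$ for some $y_i \in \N$, where $s_1^{\text{prev}}$ was the value of $s_1$ just before the call.  The subsequent zero-tests on $o_1, o_2, o_3$ force the three transfer loops to fire the maximum number of times, so that the $o$-triple is fully emptied into the $s$-triple.  Iterating from $s_1 = 1$ we obtain $s_1 = F_{k-1}^n(1) = F_k(n)$ after iteration $n$, while the invariant $s_3 = s_1 \cdot s_2$ is maintained throughout.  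This is precisely the required $F_k$-amplifier output.

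The main obstacle I anticipate is the careful book-keeping for condition (2) at the boundaries between $V^{[i]}_{k-1}$ and its first transfer loop, where the count for $s_j$ drops but the count for $o_j$ does not.  A secondary technical issue is that the pseudo-code uses illegal constructions --- the dynamic for-loop of length $n$ and the dynamic coefficients $n$, $2n$, $(n{+}1{-}i)$ that are not VASS constants --- so each must be compiled into genuine VASS transitions using the multiplication triple $(n, x, nx)$ on $(i_1, i_2, i_3)$ and the two bounded auxiliary counters $y_1, y_2$, in the spirit of Examples~\ref{ex:3VASS} and~\ref{ex:7VASS}; one must then check that the resulting VASS has size at most $C \cdot \size(V_{k-1})$ so that the exponential-in-$k$ size bound of Lemma~\ref{lem:amplifier} is preserved.
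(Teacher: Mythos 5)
Your proposal is correct and follows essentially the same route as the paper: you invoke Lemma~\ref{lem:zero-testing} on the controlling counter $c$ with the same two families of checkpoints (test counters of $V^{[i]}_{k-1}$ after each call, the $o_j$'s after each transfer loop), you verify condition~(2) by the same remaining-test-count arithmetic ($n$ and $2n$ in the initialisation, $n{+}1{-}i$ inside $V^{[i]}_{k-1}$, and $(n{-}i)\cdot 1+(n{+}1{-}i)\cdot(-1)=-1$ in the transfer loops), and you close the induction by letting the zero-tests activate the induction hypothesis on $V_{k-1}$ to obtain $s_1=F_{k-1}^{(n)}(1)=F_k(n)$ with the invariant $s_3=s_1 s_2$. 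You are in fact more careful than the paper on one point: you identify the output triple as $(s_1,s_2,s_3)$ rather than $(o_1,o_2,o_3)$, and your reading is the one consistent with the program as written, since the weight $-1$ on $c$ in the transfer loops encodes $n$ (not $n{-}1$) tests on each $o_j$, which forces $o_j=0$ at the end of the last iteration, at which point the values have already been ferried back to the $s$-triple.
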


\begin{proof}
The aim of lines 1-3 is to set the triple $(s_1, s_2, s_3)$ to values $(1, a_0, a_0)$ for some arbitrary guessed $a_0 \in \N$.
For a function $f: \N \to \N$ let us denote by $f^{(m)}(n)$ the $m$-fold application of $f$ to $n$.
Then in lines 4-11 we perform $n$ times, for $i \in \{1, \ldots, n\}$ the following operations:
\begin{itemize}
  \item in line 5 from a triple $(F_{k-1}^{(i-1)}(1), a_{i-1}, F_{k-1}^{(i-1)}(1) \cdot a_{i-1})$
  for some $a_{i-1} \in \N$ on counters $(s_1, s_2, s_3)$
  the $F_{k-1}$-amplifier $V^{[i]}_{k-1}$ computes a triple $(F_{k-1}^{(i)}(1), a_i, F_{k-1}^{(i)}(1) \cdot a_i)$ for some $a_i \in \N$
  on counters $(o_1, o_2, o_3)$ under the condition that the test counters of $V^{[i]}_{k-1}$ are zero after its run;
  \item in lines 6-11 we aim to copy the triple $(F_{k-1}^{(i)}(1), a_i, F_{k-1}^{(i)}(1) \cdot a_i)$ from counters $(o_1, o_2, o_3)$
  back to the counters $(s_1, s_2, s_3)$.
\end{itemize}
The controlling counter $c$ controls counters $o_1$, $o_2$ and $o_3$ for each  $i \in [1,d]$ the test counters of $V_{k-1}^{[i]}$,
which in particular contain counters $s_1$, $s_2$ and $s_3$.
Counter $c$ is designed to test whether each time after line 5 test counters of $V_{k-1}^{[i]}$ are zero
and each time after line 11 counters $o_1$, $o_2$ and $o_3$ are zero.
The first condition assures that indeed output of amplifier $V_{k-1}^{[i]}$ is computed correctly,
while the second condition assures that values of $o_j$ are fully copied back to values of $i_j$.
Notice that each of the test counters of $V_{k-1}^{[i]}$ and $o_j$ counters is tested exactly $n$ times in the program.
In order to fulfil condition (2) of Lemma~\ref{lem:zero-testing} we have the following modifications of counter $\vr{c}$ in our program.
In line 1 we update \add{\vr{c}}{n} and in line 3 we have \add{\vr{c}}{2n}, as counters $s_1$, $s_2$ and $s_3$ are here modified
and all of them await $n$ tests.
In lines 7, 9 and 11 each counter $s_j$ awaits for $n-i$ tests, while counter $o_j$ awaits for $n+1-i$ tests,
this is why counter $c$ is increased here by $-1 = (n-i) \cdot 1 + (n+1-i) \cdot (-1) = -1$.
It is easy to verify conditions (1) and (3) of Lemma~\ref{lem:zero-testing}. Indeed, recall that in order to check whether the considered
program is an $F_k$-amplifier we consider only runs starting with values of all the counters of $V^{[i]}_{k-1}$ beside $(s_1, s_2, s_3)$
being zero (which guarantees condition (1)) and finishing with counter value of controlling counter $c$
being zero (which guarantees condition (3)).
So counter $c$ together with the counters it controls indeed fulfil conditions of the Lemma~\ref{lem:zero-testing}.
Therefore amplifier $V_{k-1}$ computes correctly its output values and values of $o_j$ are correctly transferred to counters $s_j$.
Thus using the induction assumption stating that $V_{k-1}$ is an $F_{k-1}$-amplifier we can easily show that
in the $i$-th iteration of the for-loop we indeed have values of $(o_1, o_2, o_3)$ equal to
$(F_{k-1}^{(i)}(1), a_i, F_{k-1}^{(i)}(1) \cdot a_i)$ for some $a_i \in \N$ guessed nondeterministically.
Therefore after $n$ iterations of the for-loop final values of $(o_1, o_2, o_3)$ are
$(F_{k-1}^{(n)}(1), a_n, F_{k-1}^{(n)}(1) \cdot a_n) = (F_k(n), a_n, F_k(n) \cdot a_n)$ under the condition that
the controlling counter and the input counters are equal to zero at the end of the run. So indeed $V_k$ is an $F_k$-amplifier with
output counters $o_1$, $o_2$ and $o_3$ and test counters $i_1$, $i_2$, $i_3$ and $c$.
\end{proof}

It remains to show how the for-loop and operations \add{\vr{c}}{(n+1)-i} are implemented.
Intuitively speaking it is not problematic because we have an access to the triple of input counters $(i_1, i_2, i_3)$
fulfilling $i_1 i_2 = i_3$ and $n = i_1$. Therefore using the idea of multiplication triples we can allow for zero-testing counters bounded by $n$ and thus also for the needed operations. Below we describe these constructions in detail.

For the implementation of the needed operations we use the input counters $(i_1, i_2, i_3)$ and auxiliary counters $y_1$ and $y_2$.
Assume that the for-loop has the following shape
\begin{algorithmic}[1]
\For {\, $i$ \, := \, $1$ \, \textbf{to } $n$}\label{l:for}
  \State $\langle \text{body} \rangle$
\EndFor
\end{algorithmic}
and inside the $\langle \text{body} \rangle$ we have operations \add{\vr{c}}{(n+1)-i}.
The initial value of $\vr{i_1}$ equals $n$. Below we implement the for-loop
in such a way that all the time counter value of $\vr{i_1}$ is equal to $(n+1)-i$.
So in order to perform \add{\vr{c}}{(n+1)-i} it is enough to implement \add{\vr{c}}{i_1}.
\begin{algorithmic}[1]
\Loop
  \State $\langle \text{body} \rangle$
  \State \sub{\vr{i_1}}{1} \quad \add{\vr{y_2}}{1}
\EndLoop
\end{algorithmic}
As $i_1$ is one of the test counters, we are guaranteed that the loop indeed will be iterated exactly $n$ times.
Now we show how to implement operation \add{\vr{c}}{i_1}, which together shows how to implement
\add{\vr{c}}{n} in lines 1 and 3 and \add{\vr{c}}{(n+1)-i} in the $i$-th iteration of the for-loop.
Operation \sub{\vr{c}}{(n+1)-i} is implemented totally analogously to \add{\vr{c}}{(n+1)-i}.
The implementation works similarly as in the Example~\ref{ex:7VASS}, but here we show how to implement
\add{\vr{c}}{(n+1)-i} using only two, not three, auxiliary counters.
At the beginning we have $i_1 = n$, $y_1 = y_2 = 0$. We will keep the invariant $i_1 + y_1 + y_2 = n$.
Notice that in the $i$-th iteration values of counters are $(i_1, y_1, y_2) = (n+1-i, 0, i-1)$.
We implement the increment \add{\vr{c}}{(n+1)-i} as follows.
\begin{algorithmic}[1]
\Loop
  \State \sub{\vr{i_1}}{1} \quad \add{\vr{y_1}}{1} \quad \add{\vr{c}}{1}
\EndLoop
\State \zerotest{\vr{i_1}}
\Loop
  \State \add{\vr{i_1}}{1} \quad \sub{\vr{y_1}}{1}
\EndLoop
\State \zerotest{\vr{y_1}}
\end{algorithmic}
If zero-tests in lines 3 and 6 are performed correctly then it is easy to see that when
the above program fragment starts in valuation $(i_1, y_1) = (n+1-i, 0)$ it also finishes in the same valuation,
but a side effect is the increment \add{\vr{c}}{i_1}. Thus it remains to show that we can
implement zero-tests or counters $\vr{i_1}$ and $\vr{y_1}$.
We present how to perform \zerotest{\vr{i_1}}, the \zerotest{\vr{y_1}}
is performed analogously with roles of $i_1$ and $y_1$ swapped.
Notice that counter $\vr{i_1}$ is here bounded by $n$, so testing it for zero is much simpler
than testing counters $\vr{s_i}$ or $\vr{o_i}$ above.
On the other hand zero-testing of $\vr{i_1}$ is more complicated than the zero-tests in Example~\ref{ex:7VASS},
as there the tested counters were bounded by the size of VASS transitions. Here the tested
counters are also bounded by $n$, but $n$ is arbitrary, so in order to implement zero-tests
we need to use triples $(i_1, i_2, i_3)$. Additional technical complication is caused by the fact that
we want to optimise the number of the auxiliary counters from three to two.
The \zerotest{\vr{i_1}} is performed as follows.
\begin{algorithmic}[1]
\State \sub{\vr{i_2}}{2}
\Loop
  \State \sub{\vr{y_1}}{1} \quad \add{\vr{i_1}}{1} \quad \sub{\vr{i_3}}{1}
\EndLoop
\Loop
  \State \sub{\vr{y_2}}{1} \quad \add{\vr{y_1}}{1} \quad \sub{\vr{i_3}}{1}
\EndLoop
\Loop
  \State \sub{\vr{y_1}}{1} \quad \add{\vr{y_2}}{1} \quad \sub{\vr{i_3}}{1}
\EndLoop
\Loop
  \State \sub{\vr{i_1}}{1} \quad \add{\vr{y_1}}{1} \quad \sub{\vr{i_3}}{1}
\EndLoop
\end{algorithmic}
We aim to show that if $i_1 + y_1 + y_2 = n$ then the total effect of loops in lines 2-8 on the counter $i_3$
is the decrease by at most $2n$ and the decrease is exactly $2n$ if and only if initially $i_1 = 0$.
As in line 1 counter $i_2$ is decreased by $2$ then counter $i_3$ have to be decreased by exactly $2n$
in the rest of the program fragment, as finally values of both $i_2$ and $i_3$ need to be zero.
Therefore it remains to argue about the decrease of counter $i_3$. The easiest way to see this is to see the loop
in lines 2-3 as transferring value of counter $y_1$ to counter $i_1$, but maybe not fully.
We write it $y_1 \mapsto i_1$.
Similarly next loops correspond to transfers $y_2 \mapsto y_1$, $y_1 \mapsto y_2$ and $i_1 \mapsto y_1$, each of the transfers
may be not fully realised.
The total decrease on $c_3$ equals exactly the total amount of value transferred during all the four loops.
Notice now that the value of original $y_2$ can be used in at most two transfers: $y_2 \mapsto y_1$, $y_1 \mapsto y_2$.
Similarly the value of original $y_1$ can be used either only in $y_1 \mapsto y_2$ or in two transfers
$y_1 \mapsto i_1$ and $i_1 \mapsto y_1$. Value of original $i_1$ can be used only in the transfer $i_1 \mapsto y_2$.
Therefore the total amount of the transfer equals at most $2y_1 + 2y_2 + i_1$ and this equals $2n$ only if $i_1 = 0$.
Moreover in order to obtain the transfer of exactly $2n$ we need to perform all the transfers fully.
Therefore one can easily observe that in that case after the \zerotest{\vr{i_1}} 
values of counters $i_1$, $y_1$ and $y_2$ come back to the same values as before the \zerotest{\vr{i_1}}.
This finishes the proof that the above fragment faithfully implements \zerotest{\vr{i_1}}.

In order to finish the proof of Lemma~\ref{lem:amplifier} it is enough to observe that $\size(V_k)$ is bounded by $C \cdot \size(V_{k-1})$
for some constant $C \in \N$. Indeed $V_k$ was obtained from $V_{k-1}$ by adding several new lines and for each
operation \add{\vr{x}}{1} (or \sub{\vr{x}}{1}) for any controlled counter $\vr{x}$ adding an appropriate operation \add{\vr{c}}{(n+1)-i}
(or \sub{\vr{c}}{(n+1)-i}). Implementing each operation on counter $c$ requires a few new lines which is responsible by multiplying
the size by some constant, but not more. Thus altogether indeed $\size(V_k) \leq C \cdot \size(V_{k-1})$ which finishes the proof.

\section{Future research}\label{sec:future}
We have settled the complexity of the reachability problem for VASSes, but there are still many intriguing questions
in this topic. Here we present some, which we think need investigation in the future works of our community.

We still lack understanding of VASSes in small dimensions. The most striking example is the reachability problem for $3$-VASSes,
where the complexity gap is between \pspace-hardness (inherited from dimension $2$~\cite{DBLP:conf/lics/BlondinFGHM15})
and algorithm working in $\F_7$~\cite{DBLP:conf/lics/LerouxS19}.
We do not see any way of applying our techniques or any other known techniques
of proving lower bounds to dimension $3$, as all of them require some additional counter, which helps to enforce
the run to be exact at some control configurations. We conjecture that the reachability problem is actually elementary
for VASSes in dimension $3$ and maybe even in a few higher dimensions. Showing this seems to be a very challenging task.

Another future goal is to settle the exact complexity of the reachability problem for $d$-VASSes depending on $d$.
After our work showing $\F_d$-hardness in dimension $6d$ and independent work by J{\'{e}}r{\^{o}}me Leroux
~\cite{JeromeAckermann} showing the same in dimension $4d+9$ a lot of improvement appeared very recently.
S{\l}awomir Lasota in~\cite{DBLP:journals/corr/abs-2105.08551} building on results in the arXiv version
of this paper~\cite{DBLP:journals/corr/abs-2104.13866} improved the $\F_d$-hardness result to dimension $3d+2$.
Also J{\'{e}}r{\^{o}}me Leroux improved his own result and
showed $\F_d$-hardness for VASSes in dimension $2d+4$~\cite{DBLP:journals/corr/abs-2104-12695}.
Thus the lower bounds to improve are \tower-hardness for $d \geq 10$~\cite{DBLP:journals/corr/abs-2104-12695},
\expspace-hardness as well for $d \geq 10$~\cite{DBLP:journals/corr/abs-2104-12695}
and $\np$-hardness for $d \geq 7$ for unary encoding~\cite{DBLP:conf/concur/Czerwinski0LLM20}
and the best published upper bounds are stated in~\cite{DBLP:conf/lics/LerouxS19} to be $\F_{d+4}$ complexity for dimension $d$.
One can also suspect that even VASSes in higher dimensions can be solved efficiently
under some condition on their structure (for example avoiding some kind of bad patterns).

Complexity of the reachability problem for VASS extensions such as
pushdown VASSes~\cite{DBLP:journals/lmcs/LerouxPSS19},
branching VASSes~\cite{DBLP:conf/icalp/FigueiraLLMS17}
or data VASSes~\cite{DBLP:conf/fossacs/HofmanLLLST16}
is almost totally unexplored and even decidability is not known for them.
We hope that techniques introduced in this paper may help better understanding the mentioned extensions of VASSes
and in particular prove some complexity lower bounds.

\paragraph*{Acknowledgements}

We thank S{\l}awomir Lasota for many inspiring and fruitful discussions
and Filip Mazowiecki for careful reading of the draft of this paper.
We also thank anonymous reviewers for helpful remarks.

\bibliography{citat}

\end{document}